\documentclass[11pt,a4paper]{article}
\usepackage{calligra}

\usepackage[T1]{fontenc}

\usepackage{placeins}
\usepackage{tikz}
\usepackage{tikz-3dplot}
\usepackage{pgfplots}
\usepackage{tikzscale}
\usepackage{subcaption}
\usepackage[utf8]{inputenc}
\usepackage{amsmath}
\usepackage{amssymb}
\usepackage{graphicx}
\usepackage{epstopdf}
\usepackage[amsmath,thref,thmmarks,hyperref]{ntheorem}
\usepackage{dsfont}
\usepackage{fullpage}
\usepackage{multirow}
\usepackage[affil-it]{authblk}
\usepackage{braket}

\newcommand{\ketbra}[2]{\ket{#1}\!\bra{#2}}
\usepackage{enumerate}
\usepackage[shortlabels]{enumitem}

\usepackage{mathtools}
\usepackage{nicefrac}

\usepackage[backend=biber,firstinits=true,doi=false,url=false,maxbibnames=5,style=alphabetic]{biblatex}
\renewbibmacro{in:}{}
\addbibresource{references.bib}

\usepackage[toc,page]{appendix}

\usepackage[bookmarks]{hyperref}

\theoremseparator{.}
\theorempreskip{0.7\topsep}
\newtheorem{theorem}{Theorem}[section]
\newtheorem{lemma}[theorem]{Lemma}
\newtheorem{proposition}[theorem]{Proposition}
\newtheorem{corollary}[theorem]{Corollary}
\theorembodyfont{\normalfont}
\theoremstyle{plain}
\theoremsymbol{\ensuremath{\diamond}}
\newtheorem{definition}[theorem]{Definition}
\newtheorem*{remark}{Remark}
\makeatletter
\newtheoremstyle{MyNonumberplain}%
  {\item[\theorem@headerfont\hskip\labelsep ##1\theorem@separator]}%
  {\item[\theorem@headerfont\hskip\labelsep ##3\theorem@separator]}
\makeatother
\theoremstyle{MyNonumberplain}
\theoremheaderfont{\itshape}
\theorembodyfont{\upshape}
\theoremsymbol{\ensuremath{\Box}}
\newtheorem{proof}{Proof}

\usepackage{cleveref}

\usepackage{accents}

\newcommand{\cH}{\mathcal{H}}
\newcommand{\cD}{\mathcal{D}}
\newcommand{\D}{\mathcal{D}}
\newcommand{\Dpure}{\D_\text{pure}}

\DeclareMathAlphabet{\mathpzc}{OT1}{pzc}{m}{it}
 \newcommand{\cs}{\mathpzc{s}}

\newcommand{\cB}{\mathcal{B}}
\newcommand{\Bsa}{\mathcal{B}_\text{sa}}

\newcommand{\one}{\mathds{1}}
\newcommand{\eps}{\varepsilon}

\newcommand{\RR}{\mathbb{R}}
\newcommand{\R}{\mathbb{R}}

\DeclareMathOperator{\tr}{Tr}

\DeclareMathOperator{\id}{id}

\DeclareMathOperator{\argmax}{argmax}

\DeclareMathOperator{\diag}{diag}

\renewcommand{\epsilon}{\varepsilon}
\newcommand{\Be}{B_\varepsilon}

\renewcommand{\one}{\mathds{1}}
\renewcommand{\H}{\mathcal{H}}
\DeclareMathOperator{\spec}{spec}

\newcommand{\mmm}{\mathcal{M}} 

\begin{document}
\title{Tight uniform continuity bound for a family of entropies}

\author{Eric P. Hanson\thanks{Email: \texttt{ephanson@damtp.cam.ac.uk}} }
\author{Nilanjana Datta\thanks{Email: \texttt{n.datta@damtp.cam.ac.uk}}}
\affil{\small Department of Applied Mathematics and Theoretical Physics, Centre for Mathematical Sciences\\University of Cambridge, Cambridge~CB3~0WA, UK}

\maketitle

\begin{abstract}
We prove a tight uniform continuity bound for a family of entropies which includes the von Neumann entropy, the Tsallis entropy and the $\alpha$-R\'enyi entropy, $S_\alpha$, for $\alpha\in (0,1)$. 
We establish necessary and sufficient conditions for equality in the continuity bound and prove that these conditions are the same for every member of the family. Our result builds on 
recent work in which we constructed a state which was majorized by every state in a neighbourhood ($\eps$-ball) of a given state, and thus was the minimal state in majorization order in the $\eps$-ball. 
This minimal state satisfies a particular semigroup property, which we exploit to prove our bound.
\end{abstract}

\section{Introduction}

Entropies play a fundamental role in quantum information theory as characterizations of the optimal rates of information theoretic tasks, and as measures of uncertainty. The mathematical properties of entropic functions therefore have important physical implications. The von Neumann entropy $S$, for instance, as a function of $d$-dimensional quantum states, is strictly concave, continuous, and is bounded by $\log d$. As the von Neumann entropy characterizes the optimal rate of data compression for a memoryless quantum information source \cite{Schumacher}, continuity of the von Neumann entropy, for example, implies that the quantum data compression limit is continuous in the source state. The $\alpha$-R\'enyi entropies $S_\alpha$ are parametrized by $\alpha \in (0,1)\cup (1,\infty)$, and are a generalization of the von Neumann entropy in the sense that $\lim_{\alpha\to 1} S_\alpha = S$. The $\alpha$-R\'enyi entropy has been used to bound the quantum communication complexity of distributed information-theoretic tasks \cite{vDH02}, can be interpreted in terms of the free energy of a quantum or classical system \cite{Baez11}, and is the fundamental quantity defining the entanglement $\alpha$-R\'enyi entropy \cite{WMVF16}.

In fact, the $\alpha$-R\'enyi entropies are members of a large family of entropies called the $(h,\phi)$-entropies, which are parametrized by two functions $h,\phi$ on $\R$ subject to certain constraints (see \Cref{sec:notation_and_def}). This family includes the Tsallis entropies \cite{Tsallis1988} and the unified entropies (considered by Rastegin in \cite{Ras2011}). Note that the $(h,\phi)$-entropy of a quantum state is the classical $(h,\phi)$-entropy of its eigenvalues, and therefore the results here apply equally well to probability distributions on finite sets.

Continuity is a useful property of entropic functions, particularly when cast in the form of a \emph{uniform continuity bound}: given two $d$-dimensional states which are at a trace distance of at most $\eps \in (0,1)$, this provides a bound on their entropy difference entirely in terms of $\eps$ and $d$. Fannes first proved a uniform continuity bound for the von Neumann entropy \cite{Fannes1973}. This bound was improved to a tight form by Audenaert \cite{Audenaert07} and is often called the called the Audenaert-Fannes bound (see also \cite[Theorem 3.8]{PetzQITbook}). Rastegin proved similar continuity bounds for the unified entropies, which include the $\alpha$-R\'enyi entropies and Tsallis entropies, but the resulting bounds are not known to be tight \cite{Ras2011}. Recently, Chen et al proved continuity bounds for the $\alpha$-R\'enyi entropy for $\alpha \in (0,1)\cup(1,\infty)$ using techniques similar to Audenaert's proof of the Audenaert-Fannes bound \cite{Renyi-CMNF}, but the resulting bounds are known to be not tight \cite{comm-CMNF}.

In \cite{HD17}, we considered {\em{local continuity bounds}}. Given a $d$-dimensional quantum state $\sigma$, a \emph{local} continuity bound of an entropic function $H$ at $\sigma$ is a bound on the entropy difference $|H(\omega) - H(\sigma)|$ for any $\omega$ in an $\eps$-ball around $\sigma$, which depends not only on $\eps$ and $d$ but also on the state $\sigma$ itself. These local bounds hence incorporate additional information about the state $\sigma$, for example, its spectrum, to yield a bound which is tighter than a uniform continuity bound. By finding maximizers and minimizers of the \emph{majorization order} on $d$-dimensional quantum states over the $\eps$-ball around $\sigma$, local bounds were obtained for any $(h,\phi)$-entropy, in fact, for any Schur concave entropic function in \cite{HD17}. 

Given a quantum state $\sigma$ and $\eps\in(0,1]$, we denote the $\eps$-ball in trace distance around $\sigma$ by $\Be(\sigma)$ (defined by \cref{eq:eps-ball} below). For a given $\sigma$ and $\eps$, there exist two quantum states $\sigma^*_\sigma,\sigma_{*,\eps} \in \Be(\sigma)$ such that for any $\omega\in \Be(\sigma)$ centered at $\sigma$,
\[
\sigma^*_\eps  \prec \omega \prec \sigma_{*,\eps}
\]
where $\prec$ denotes the majorization order (defined in \Cref{sec:notation_and_def}). In \cite{HD17}, this fact was proved by explicit construction of these states, using the notation $\rho^*_\eps(\sigma)$ for $\sigma^*_\eps$
and $\rho_{*,\eps}(\sigma)$ for $\sigma_{*,\eps}$. These states were also independently found by Horodecki, Oppenheim, and Sparaciari \cite{HO17approxmaj}, and considered in the context of thermal majorization \cite{Remco, MNW17_thermal}. In \cite{HD17} we also established that the minimal state $\rho^*_\eps(\sigma)\equiv \sigma^*$ in the majorization order, satisfied a semigroup property:  $\rho^*_{\eps_1+\eps_2}(\sigma) = \rho^*_{\eps_1}( \rho^*_{\eps_2}(\sigma))$. This property plays a key role in the proof of the main results of this paper.

 In \Cref{sec:notation_and_def} we introduce the basic notation and definitions and in \Cref{sec:main_results} we state our main results. The proof strategy is described in \Cref{sec:proof-strat}  and in \Cref{sec:Lambda-eps} the construction of the minimal state (in the majorization order), $\sigma_\eps^*$, which we use in our proof, is formulated. \Cref{sec:proof-Delta-eps-Schur-convex} consists of a proof of the main technical result \Cref{thm:Delta_eps_Schur_convex} and employs certain lemmas which are proved in \Cref{sec:proof_lemmas}. In \Cref{sec:elem-prop-concave-fun}, we recall an elementary property of concave functions.

\section{Notation and definitions \label{sec:notation_and_def}}
Let $\cH$ denote a finite-dimensional Hilbert space, with $\dim \H = d$, $\cB(\cH)$ the set of (bounded) linear operators on $\cH$, and  $\Bsa(\cH)$ the set of self-adjoint linear operators on $\cH$. A quantum state (or density matrix) is a positive semidefinite element of $\cB(\cH)$ with trace one. Let $\D(\cH)$ be the set quantum states on $\cH$. We denote the completely mixed state by $\tau := \frac{\one}{d}$. A pure state is a rank-1 density matrix; we denote the set of pure states by $\Dpure(\cH)$. For two quantum states $\rho,\sigma \in \D(\cH)$, the \emph{trace distance} between them 
is given by
\[
T(\rho,\sigma) = \frac{1}{2}\|\rho-\sigma\|_1.
\]
We define the $\eps$-ball around $\sigma\in \D(\cH)$ as the set
\begin{equation}
\Be(\sigma) = \{ \omega\in \D(\cH): T(\omega,\sigma) \leq \eps \}. \label{eq:eps-ball}
\end{equation}
For any $A\in \Bsa(\cH)$, let $\lambda_+ (A)$ and $\lambda_-(A)$ denote the maximum and minimum eigenvalue of $A$, respectively, and $k_+(A)$ and $k_-(A)$ denote their multiplicities. Let $\lambda_j(A)$ denote the $j$th largest eigenvalue, counting multiplicity; that is, the $j$th element of the ordering
\[
\lambda_1(A)\geq \lambda_2(A) \geq \dotsm \geq \lambda_d(A).
\]
We set $\vec \lambda (A) := (\lambda_i(A))_{i=1}^d \in\R^d$ and denote the set of eigenvalues of $A\in \Bsa(\cH)$ by $\spec A \subset \R$.

Given $x\in \R^d$, write $x^\downarrow = (x^\downarrow_j)_{j=1}^d$ for the permutation of $x$ such that $x^\downarrow_1 \geq x^\downarrow_2 \geq \dotsm \geq x^\downarrow_d$. For $x,y\in \R^d$, we say $x$ \emph{majorizes} $y$, written $x \succ y$, if 
	\begin{equation} \label{def:majorize}
	 \sum_{j=1}^k x^\downarrow_j \geq \sum_{j=1}^k y^\downarrow_j \quad \forall k=1,\dotsc,d-1, \quad \text{and}\quad \sum_{j=1}^d x^\downarrow_j = \sum_{j=1}^d y^\downarrow_j.
	 \end{equation} 
Given two states $\rho,\sigma\in \cD$, we say $\sigma$ majorizes $\rho$, written $\rho\prec \sigma$ if $\vec\lambda(\rho) \prec \vec\lambda(\sigma)$. We say that $\varphi: \D \to \R$ is \emph{Schur convex} if $\varphi(\rho)\leq \varphi(\sigma)$ for any $\rho,\sigma\in \cD$ with $\rho \prec \sigma$. If $\varphi(\rho) < \varphi(\sigma)$ for any $\rho,\sigma\in \cD$ such that  $\rho \prec \sigma$, and $\rho$ is not unitarily equivalent to $\sigma$, then $\varphi$ is \emph{strictly Schur convex}. We say $\varphi$ is Schur concave (resp.~strictly Schur concave) if $(-\varphi)$ is Schur convex (resp.~strictly Schur convex).

Let $h: \R\to \R$ and $\phi: [0,1] \to \R$ with $\phi(0) = 0$ and $h(\phi(1)) = 0$, such that either $h$ is strictly increasing and $\phi$ strictly concave, or $h$ strictly decreasing and $\phi$ strictly convex. Then the $(h,\phi)$-entropy, $H_{(h,\phi)}$, is defined by
\begin{equation} \label{eq:def_h-phi-entropy}
H_{(h,\phi)}(\rho) := h(\tr[\phi(\rho)])
\end{equation}
where $\phi$ is defined on $\D(\cH)$ by functional calculus, i.e.~given the eigen-decomposition $\rho = \sum_i \lambda_i(\rho) \pi_i$, we have $\phi(\rho) = \sum_i \phi(\lambda_i(\rho)) \pi_i$. Every $(h,\phi)$-entropy is strictly Schur concave and unitarily invariant; moreover, if $h$ is concave, then $H_{(h,\phi)}$ is concave \cite{Bosyk2016}. Here, we are most interested in the following three examples of $(h,\phi)$ entropies:
\begin{itemize}
	\item The von Neumann entropy
	\[
	 S(\rho) = -\tr (\rho \log \rho).
	 \] 
	 $S$ is the $(h,\phi)$ entropy with $h=\id$, i.e., $h(x) = x$ for $x\in \RR$, and with $\phi(x) = - x \log x$ for $x\in [0,1]$. The von Neumann entropy satisfies the following tight continuity bound known as the Audenaert-Fannes bound \cite{Audenaert07} (see also \cite[Theorem 3.8]{PetzQITbook}). Given $\eps\in(0,1]$ and $\rho,\sigma\in \cD(\cH)$ with $T(\rho,\sigma)\leq \eps$, 
	 \begin{equation}
 |S(\rho) - S(\sigma) | \leq \begin{cases}
 \epsilon \log (d-1) + h(\epsilon) & \text{if } \epsilon < 1 - \tfrac{1}{d} \\
	 \log d & \text{if } \epsilon \geq 1 - \tfrac{1}{d}
 \end{cases} \label{eq:Audenaert-Fannes_bound}
 \end{equation}
where $h(\eps) := - \eps \log \eps - (1-\eps) \log (1-\eps)$ denotes the binary entropy.
\item The $q$-Tsallis entropy for $q\in(0,1)\cup(1,\infty)$,
\[
T_q(\rho) = \frac{1}{1-q}[\tr(\rho^q)-1].
\]
$T_q$ can be written as the $(h,\phi)$-entropy with $h(x) = x - \frac{1}{1-q}$ for $x\in \R$ and $\phi(x) = \frac{1}{1-q}x^q$. With these choices, $h$ is strictly increasing and affine (and therefore concave) and $\phi$ is strictly concave, for all $q\in (0,1)\cup(1,\infty)$.
	\item The $\alpha$-R\'enyi entropy for $\alpha \in (0,1)\cup(1,\infty)$,
	\[
	S_\alpha(\rho) = \frac{1}{1-\alpha} \log \left(\tr \rho^\alpha\right).
	\]
	$S_\alpha$ is the $(h,\phi)$-entropy with $h(x) = \frac{1}{1-\alpha}\log x$ for $x\in \RR$ and $\phi(x) = x^\alpha$ for $x\in [0,1]$. For $\alpha\in (0,1)$, $h$ is concave and strictly increasing and $\phi$ is strictly concave. For $\alpha >1$, $h$ is convex and strictly decreasing, and $\phi$ is strictly convex. It is known that $\lim_{\alpha \to 1}S_\alpha(\rho) = S(\rho)$.
\end{itemize}
In the above, all logarithms are taken to base $2$.

\section{Main results \label{sec:main_results}}
\begin{theorem}[Uniform continuity bounds] Let $H_{(h,\phi)}$ be an $(h,\phi)$-entropy, defined through \eqref{eq:def_h-phi-entropy}) with $h$ concave and $\phi$ strictly concave.
 For $\eps \in (0,1]$ and any states $\rho,\sigma\in \D(\cH)$  such that $\frac{1}{2}\|\rho-\sigma\|_1\leq \eps$, we have
 \begin{equation}
 | H_{(h,\phi)}(\rho) - H_{(h,\phi)}(\sigma) | \leq \begin{cases}
h( \phi(1-\eps) + (d-1) \phi( \frac{\eps}{d-1})) & \eps < 1-\frac{1}{d}\\
h(d\phi(\frac{1}{d})) & \eps \geq 1 - \frac{1}{d}
\end{cases} \label{eq:hphi-uniform-bound} 
 \end{equation}
 and in particular, for $\alpha \in (0,1)$,
 \begin{equation}
 | S_\alpha(\rho) - S_\alpha(\sigma) | \leq \begin{cases}
\frac{1}{1-\alpha} \log ( (1-\eps)^\alpha + (d-1)^{1-\alpha} \eps^\alpha) & \eps < 1-\frac{1}{d}\\
\log d& \eps \geq 1 - \frac{1}{d}
\end{cases} \label{eq:uniform_Renyi_bound}
 \end{equation}
 and for $q \in (0,1)\cup(1,\infty)$,
 \begin{equation}
 | T_q(\rho) - T_q(\sigma) | \leq \begin{cases}
\frac{1}{1-q}  ( (1-\eps)^q + (d-1)^{1-q} \eps^q - 1) & \eps < 1-\frac{1}{d}\\
\frac{ d^{1-q}-1}{1-q}& \eps \geq 1 - \frac{1}{d},
\end{cases} \label{eq:uniform_Tsallis_bound} 
 \end{equation}
where $d =\dim \cH$. Moreover, equality in \eqref{eq:hphi-uniform-bound}, \eqref{eq:uniform_Renyi_bound}, or \eqref{eq:uniform_Tsallis_bound} occurs if and only if one of the two states (say, $\sigma$) is pure, and either
\begin{enumerate}
	\item $\eps < 1 - \frac{1}{d}$ and $\vec \lambda(\rho) = ( 1- \eps, \frac{\eps}{d-1},\dotsc, \dotsc, \frac{\eps}{d-1})$, or
	\item  $\eps \geq 1- \frac{1}{d}$, and $\rho = \tau := \frac{\one}{d}$.
\end{enumerate}\label{thm:hphi-GCB}
\end{theorem}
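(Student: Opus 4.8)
The plan is to derive the uniform bound from a \emph{local} bound combined with a single monotonicity statement, and then to read off the equality conditions. Throughout write $H := H_{(h,\phi)}$ and, for a state $\sigma$ and $\eps\in(0,1]$, set
\[
\Delta_\eps(\sigma) := H(\rho^*_\eps(\sigma)) - H(\sigma),
\]
where $\rho^*_\eps(\sigma)$ is the minimal state, in majorization order, of the ball $\Be(\sigma)$. The first step is to reduce the two-state problem to controlling $\Delta_\eps$. Given $\rho,\sigma$ with $T(\rho,\sigma)\le\eps$ we may assume without loss of generality that $H(\rho)\ge H(\sigma)$; then $\rho\in\Be(\sigma)$, so $\rho^*_\eps(\sigma)\prec\rho$, and since every $(h,\phi)$-entropy is Schur concave, $H(\rho)\le H(\rho^*_\eps(\sigma))$. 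Hence
\[
|H(\rho)-H(\sigma)| = H(\rho)-H(\sigma) \le \Delta_\eps(\sigma) \le \max_{\omega\in\D(\cH)}\Delta_\eps(\omega),
\]
and it suffices to maximize $\Delta_\eps$ over all states.

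The heart of the argument is the claim that $\sigma\mapsto\Delta_\eps(\sigma)$ is Schur convex on $\D(\cH)$; this is the main obstacle, and is precisely the technical result \Cref{thm:Delta_eps_Schur_convex}. Granting it, since every pure state $\psi\in\Dpure(\cH)$ majorizes every $\sigma\in\D(\cH)$, we obtain $\Delta_\eps(\sigma)\le\Delta_\eps(\psi)$, so the maximum is attained on pure states. To prove the Schur convexity I would exploit the semigroup property $\rho^*_{\eps_1+\eps_2}(\sigma)=\rho^*_{\eps_1}(\rho^*_{\eps_2}(\sigma))$: it lets one write $\Delta_\eps$ as an accumulation of infinitesimal increments along the flow $t\mapsto\rho^*_t(\sigma)$ and reduces the statement to an infinitesimal monotonicity under a single Robin-Hood transfer, which one verifies from the explicit form of $\rho^*_\eps$ (it deterministically moves weight out of the largest eigenvalue into the smallest ones) together with concavity of $\phi$ and the monotonicity and concavity of $h$.

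It then remains to evaluate $\Delta_\eps$ at a pure state and to specialize. For pure $\psi$ we have $H(\psi)=h(\phi(1)+(d-1)\phi(0))=h(\phi(1))=0$, while $\rho^*_\eps(\psi)$ has spectrum $(1-\eps,\tfrac{\eps}{d-1},\dots,\tfrac{\eps}{d-1})$ when $\eps<1-\tfrac1d$ and equals $\tau=\one/d$ when $\eps\ge 1-\tfrac1d$ (the globally minimal state, once it lies in the ball). Substituting these into $H=h(\tr\phi(\,\cdot\,))$ yields exactly the right-hand side of \eqref{eq:hphi-uniform-bound}; inserting the specific $(h,\phi)$ for the $\alpha$-R\'enyi and Tsallis entropies — for instance $d\phi(1/d)=d^{1-\alpha}$ and $h(d^{1-\alpha})=\log d$ in the R\'enyi case — then gives \eqref{eq:uniform_Renyi_bound} and \eqref{eq:uniform_Tsallis_bound}.

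For the equality characterization I would track equality through the two inequalities used above. Equality in the reduction step forces $H(\rho)=H(\rho^*_\eps(\sigma))$, and by \emph{strict} Schur concavity of every $(h,\phi)$-entropy this means $\rho$ is unitarily equivalent to $\rho^*_\eps(\sigma)$, i.e. $\vec\lambda(\rho)=(1-\eps,\tfrac{\eps}{d-1},\dots,\tfrac{\eps}{d-1})$ when $\eps<1-\tfrac1d$, or $\rho=\tau$ when $\eps\ge 1-\tfrac1d$. Equality in the maximization step forces the lower-entropy state $\sigma$ to be extremal in majorization, hence pure. Since both strictness facts — strict Schur concavity of $H$ and strictness of the Schur convexity of $\Delta_\eps$ — hold for \emph{every} admissible $(h,\phi)$ and the resulting conditions are purely majorization-theoretic, the equality conditions coincide across the whole family, as claimed. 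The delicate point is to confirm that the Schur convexity of $\Delta_\eps$ is \emph{strict} off unitary-equivalence classes, which I would again extract from strictness in the infinitesimal monotonicity step.
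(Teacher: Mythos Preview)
Your overall strategy matches the paper's exactly: define $\Delta_\eps(\sigma)=H(\rho^*_\eps(\sigma))-H(\sigma)$, reduce the two-state problem to maximizing $\Delta_\eps$, prove $\Delta_\eps$ is Schur convex via the semigroup property, and evaluate at a pure state. The sketch of how the semigroup property reduces Schur convexity to a local step is also in the spirit of the paper's finite-step iteration.

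There is, however, a genuine gap in your equality argument. You assert that $\Delta_\eps$ is \emph{strictly} Schur convex for every admissible $(h,\phi)$, and use this to force the lower-entropy state to be pure. This is false. The paper explicitly observes (immediately after \Cref{thm:Delta_eps_Schur_convex}) that when $h$ is not strictly concave, $\Delta_\eps$ need not be strictly Schur convex, and gives a concrete counterexample for the von Neumann entropy: $\rho=\diag(0.1,0.2,0.2,0.5)$ and $\sigma=\diag(0.1,0.15,0.25,0.5)$ satisfy $\rho\prec\sigma$, are not unitarily equivalent, yet $\Delta_\eps(\rho)=\Delta_\eps(\sigma)$ for small $\eps$. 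So your proposed ``strictness in the infinitesimal monotonicity step'' cannot go through in the generality claimed.

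What the paper actually proves is the weaker but sufficient statement that $\rho\prec\sigma$ and $\Delta_\eps(\rho)=\Delta_\eps(\sigma)$ force $\lambda_+(\rho)=\lambda_+(\sigma)$. This partial strictness on the \emph{top eigenvalue} is exactly enough: if $\omega$ maximizes $\Delta_\eps$ then comparing with any pure $\psi$ gives $\lambda_+(\omega)=\lambda_+(\psi)=1$, so $\omega$ is pure (this is the content of \Cref{cor:Delta-eps-max-at-pure}). You should replace your appeal to strict Schur convexity by this argument; the rest of your equality analysis (using strict Schur concavity of $H$ to pin down $\rho$ as unitarily equivalent to $\rho^*_\eps(\sigma)$) is correct and matches the paper.
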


\begin{remark}
~\begin{itemize}
	\item When \eqref{eq:hphi-uniform-bound} is applied to the von Neumann entropy $S$, one recovers the Audenaert-Fannes bound, \eqref{eq:Audenaert-Fannes_bound}, with equality conditions. The sufficiency of these equality conditions were shown in \cite{Audenaert07}, and their necessity was recently derived in \cite{HD17} by an analysis of the proof of the bound presented in \cite[Thm. 3.8]{PetzQITbook} and \cite{Winter16}, which involves a coupling argument. We establish that these necessary and sufficient conditions are the same for every $(h,\phi)$-entropy satisfying the conditions of the theorem.
	\item The inequality \eqref{eq:uniform_Renyi_bound}  reduces to the Audenaert-Fannes bound \eqref{eq:Audenaert-Fannes_bound}  when the limit $\alpha \to 1$ is taken on both sides of it.
	\item The bound \eqref{eq:uniform_Tsallis_bound} appeared in \cite{Renyi-CMNF} as Lemma 1.2, and was derived with a different method. However, the equality conditions were not established.
	\item See \Cref{fig:bound_comparison} for a comparison of our uniform continuity bound for the $\alpha$-R\'enyi entropy, \eqref{eq:uniform_Renyi_bound}, for $\alpha = \frac{1}{2}$, with those obtained in \cite{Ras2011} and \cite{Renyi-CMNF}.
\end{itemize}

\end{remark}

\begin{figure}[ht]
\centering
\includegraphics[width = .75\textwidth]{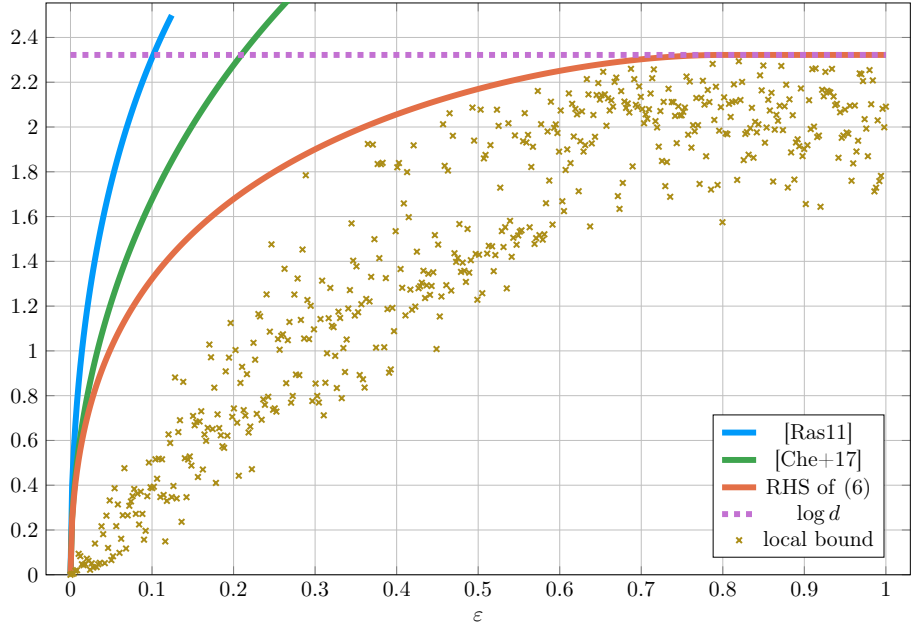}
\caption{In dimensions $d=5$, for $\alpha = \frac{1}{2}$, the bound given by the right-hand side of \eqref{eq:uniform_Renyi_bound} is compared to the bounds given by Equation (7) of \cite{Renyi-CMNF} and by Equation (27) of \cite{Ras2011}. We also include the trivial bound $\log d$, as well as 500 points corresponding the local bounds found in \cite{HD17} computed at (uniformly) randomly chosen $\sigma\in \D(\cH)$.} \label{fig:bound_comparison}
\end{figure}

\section{Proof strategy} \label{sec:proof-strat}

Given a state $\sigma\in \D(\cH)$ and $\eps\in (0,1]$, one can construct two states $\sigma_\eps^*,\sigma_{*,\eps}\in \Be(\sigma)$ such that
\begin{equation} \label{eq:maj-order}
\sigma_\eps^* \prec \omega \prec \sigma_{*,\eps}
\end{equation}
for any $\omega\in \Be(\sigma)$. This was done in \cite{HD17}, with the notation $\rho_\eps^*(\sigma)$ (resp.~$\rho_{*,\eps}(\sigma)$) to denote $\sigma_\eps^*$ (resp.~$\sigma_{*,\eps}$). These states were also independently found in \cite{HO17approxmaj}, and considered in the context of thermal majorization in \cite{Remco,MNW17_thermal}. The proof of our main result relies on the form of $\sigma_\eps^*$ and its properties. An explicit construction of $\sigma_\eps^*$ is given in \Cref{sec:Lambda-eps}, and its properties are described in \Cref{prop:properties_of_Lambda_eps}.

Consider an $(h,\phi)$ entropy $H_{(h,\phi)}$, and let $\epsilon\in (0,1]$, and $\rho,\sigma\in \cD(\cH)$ with $T(\rho,\sigma)\leq \epsilon$. If $H_{(h,\phi)}(\rho) \geq H_{(h,\phi)}(\sigma)$, then since $\rho\in \Be(\sigma)$,
\begin{equation}
|H(\rho) - H(\sigma)| = H(\rho) - H(\sigma) \leq \max_{\omega\in \Be(\sigma)} H(\omega) - H(\sigma) = H(\sigma_\eps^*) - H(\sigma) \label{eq:Hrho-Hsig}
\end{equation}
where the last equality follows from the first majorization relation in \cref{eq:maj-order} and the strict Schur concavity of $H_{(h,\phi)}$. Similarly, if $H_{(h,\phi)}(\sigma) \geq H_{(h,\phi)}(\rho)$, \cref{eq:Hrho-Hsig} holds with $\sigma$ (resp.~$\sigma_\eps^*$) replaced by $\rho$ (resp.~$\rho_\eps^*$). Hence, in general,
\begin{equation} \label{eq:UBHrho-Hsig_by_max}
|H_{(h,\phi)}(\rho) - H_{(h,\phi)}(\sigma)| \leq \max \{\Delta_\eps(\rho),\Delta_\eps(\sigma)\} \leq \max_{\omega \in \cD(\cH)} \Delta_\eps(\omega),
\end{equation}
where 
\begin{equation} \label{eq:def_Delta-eps}
\begin{aligned}
\Delta_\eps : \qquad \cD(\cH) &\to \R_{\geq 0}\\
\omega &\mapsto H_{(h,\phi)}\circ \mmm_\eps(\omega) - H_{(h,\phi)}(\omega),
\end{aligned}
\end{equation}
and $\mmm_\eps$ is the \emph{majorization-minimizer map},
\begin{equation} \label{eq:def_mmm-eps}
\begin{aligned}
\mmm_\eps : \qquad \cD(\cH) &\to  \cD(\cH)\\
\omega &\mapsto \omega_\eps^*.
\end{aligned}
\end{equation}
This map is defined explicitly by \cref{def:Lambdaeps} in \Cref{sec:Lambda-eps}.
 Note that $\Delta_\eps(\omega)\geq 0$ for $\omega\in \cD(\cH)$ follows from the Schur concavity of the $(h,\phi)$-entropy. To prove \Cref{thm:hphi-GCB}, it remains to maximize $\Delta_\eps$ over $\cD(\cH)$. 

We show that for $(h,\phi)$-entropies for which $h$ is concave and $\phi$ (strictly) convex, $\Delta_\eps$ is a Schur convex function on $\cD(\cH)$, which is our main technical result. We u defer its proof to \Cref{sec:proof-Delta-eps-Schur-convex}.
\begin{theorem} \label{thm:Delta_eps_Schur_convex}
Assume $h$ is concave and $\phi$ is strictly concave. Let $\eps\in (0,1]$. Then $\Delta_\eps: \D(\cH)\to  \R_{\geq 0}$ is  Schur convex. That is, if $\rho\prec \sigma$,
\[
\Delta_\eps(\rho) \leq \Delta_\eps(\sigma).
\]
Moreover, $\Delta_\eps(\rho) = \Delta_\eps(\sigma)
$ implies $\lambda_+(\rho) = \lambda_+(\sigma)$. Lastly, if $h$ is strictly concave, then $\Delta_\eps$ is strictly Schur convex.
\end{theorem}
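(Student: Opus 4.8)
The plan is to exploit the semigroup property $\mmm_{s+t} = \mmm_s \circ \mmm_t$ of the majorization‑minimizer map to write $\Delta_\eps$ as an integral of an infinitesimal ``entropy production rate'', and then to prove Schur convexity at the level of that rate. Concretely, set $F(s) := H_{(h,\phi)}(\mmm_s(\sigma))$, so $\Delta_\eps(\sigma) = F(\eps) - F(0)$. Differentiating and using the semigroup property in the form $\mmm_{s+\delta} = \mmm_\delta \circ \mmm_s$ gives, for $\delta \to 0^+$,
\[
\frac{d}{ds} F(s) = \lim_{\delta \to 0^+} \frac{H_{(h,\phi)}(\mmm_\delta(\mmm_s(\sigma))) - H_{(h,\phi)}(\mmm_s(\sigma))}{\delta} =: G(\mmm_s(\sigma)),
\]
where $G(\omega)$ denotes the right derivative of $t \mapsto H_{(h,\phi)}(\mmm_t(\omega))$ at $t=0$. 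Granting the requisite regularity (addressed below), the fundamental theorem of calculus then yields $\Delta_\eps(\sigma) = \int_0^\eps G(\mmm_s(\sigma))\,ds$, reducing the theorem to an analysis of the single scalar $G$.

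Second, I would compute $G$ explicitly. To first order in $t$, the minimizer $\mmm_t$ removes total probability mass $t$ from the maximal eigenvalue(s), of value $\lambda_+(\omega)$, and adds it to the minimal eigenvalue(s), of value $\lambda_-(\omega)$; the resulting first‑order change in $\tr[\phi(\cdot)]$ is $(\phi'(\lambda_-(\omega)) - \phi'(\lambda_+(\omega)))\,t$, independently of the multiplicities. Hence
\[
G(\omega) = h'(\tr[\phi(\omega)])\,\bigl(\phi'(\lambda_-(\omega)) - \phi'(\lambda_+(\omega))\bigr).
\]
Both factors are nonnegative: $h$ is strictly increasing (the case forced by $\phi$ strictly concave), so $h' > 0$, and $\phi'$ is strictly decreasing, so $\phi'(\lambda_-) - \phi'(\lambda_+) \geq 0$, with equality iff $\omega = \tau$. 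The key observation is that $G$ is itself Schur convex: if $\rho \prec \sigma$ then $\tr[\phi(\rho)] \geq \tr[\phi(\sigma)]$ (Schur concavity of $\tr[\phi(\cdot)]$), so $h'(\tr[\phi(\rho)]) \leq h'(\tr[\phi(\sigma)])$ since $h$ is concave; moreover $\lambda_+(\rho) \leq \lambda_+(\sigma)$ and $\lambda_-(\rho) \geq \lambda_-(\sigma)$, so by monotonicity of $\phi'$ the second factor also does not increase. As a product of two nonnegative quantities, each nonincreasing under majorization, $G(\rho) \leq G(\sigma)$.

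Third, I would combine these with order preservation of the minimizer map, namely $\rho \prec \sigma \implies \mmm_s(\rho) \prec \mmm_s(\sigma)$ for every $s \in [0,1]$; I expect this to require a separate lemma, proved from the explicit construction of $\mmm_s$ in \Cref{sec:Lambda-eps}. Granting it, Schur convexity of $G$ gives $G(\mmm_s(\rho)) \leq G(\mmm_s(\sigma))$ for each $s$, and integrating over $s \in [0,\eps]$ delivers $\Delta_\eps(\rho) \leq \Delta_\eps(\sigma)$. The main obstacles are exactly (i) this order‑preservation lemma and (ii) the rigorous justification of the integral representation, since $s \mapsto H_{(h,\phi)}(\mmm_s(\sigma))$ is only piecewise smooth (its derivative jumps when the set of capped or raised eigenvalues changes) and $G$ can diverge at states with a zero eigenvalue, e.g.\ $\phi'(0) = +\infty$. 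Both issues should be benign because $\mmm_s(\sigma)$ has full support for $s > 0$, so $G \circ \mmm_s$ is finite and continuous on $(0,\eps]$ while $s \mapsto H_{(h,\phi)}(\mmm_s(\sigma))$ is absolutely continuous.

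Finally, for the equality and strictness claims I would read off the structure of $G$. If $\Delta_\eps(\rho) = \Delta_\eps(\sigma)$, then $G(\mmm_s(\rho)) = G(\mmm_s(\sigma))$ for almost every $s \in (0,\eps)$; at each such $s$ the two factors of $G$ are individually no larger for $\mmm_s(\rho)$, and equality of their positive product forces equality of the factors, whence (by strict monotonicity of $\phi'$, after separating off the degenerate case $\mmm_s(\sigma) = \tau$) $\lambda_+(\mmm_s(\rho)) = \lambda_+(\mmm_s(\sigma))$; letting $s \to 0^+$ and using continuity of $s \mapsto \lambda_+(\mmm_s(\cdot))$ gives $\lambda_+(\rho) = \lambda_+(\sigma)$. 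For strictness when $h$ is strictly concave, take $\rho \prec \sigma$ with distinct spectra and $\sigma \neq \tau$; then for small $s > 0$ the spectra of $\mmm_s(\rho)$ and $\mmm_s(\sigma)$ remain distinct and $\tr[\phi(\mmm_s(\rho))] > \tr[\phi(\mmm_s(\sigma))]$ strictly, so $h'(\tr[\phi(\mmm_s(\rho))]) < h'(\tr[\phi(\mmm_s(\sigma))])$ while the second factor of $G$ is strictly positive; this makes $G(\mmm_s(\rho)) < G(\mmm_s(\sigma))$ on a set of positive measure, hence $\Delta_\eps(\rho) < \Delta_\eps(\sigma)$.
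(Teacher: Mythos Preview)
Your approach is sound and is essentially a continuous (infinitesimal) version of the paper's discrete iteration. Both proofs rest on the same two structural facts: the semigroup property $\mmm_{s+t}=\mmm_s\circ\mmm_t$ and order preservation $\rho\prec\sigma\Rightarrow\mmm_s(\rho)\prec\mmm_s(\sigma)$ (the latter is already \Cref{prop:properties_of_Lambda_eps}\ref{item:Lambda_eps-maj-preserving}, so your obstacle (i) dissolves). Where you integrate the rate $G(\omega)=h'(\tr\phi(\omega))\,[\phi'(\lambda_-(\omega))-\phi'(\lambda_+(\omega))]$ over $s\in[0,\eps]$, the paper instead telescopes over finitely many subintervals of length $\delta_k=\delta(\rho_k,\sigma_k)$, chosen so that on each piece only the extreme eigenvalues move; and where you compare derivatives, the paper compares finite slopes $\cs(x_1,x_2)=\frac{\phi(x_2)-\phi(x_1)}{x_2-x_1}$ via \Cref{cor:main_concave_ieq}. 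The paper also first strips off $h$ by a separate slope argument (\Cref{lem:reduce-to-id}) and then works with $h=\id$, whereas you keep $h$ throughout via the factor $h'$ in $G$.

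The trade-off: the paper's finite-difference route never invokes differentiability of $h$ or $\phi$, which the theorem does not assume; your formula for $G$ is literally valid only where $h',\phi'$ exist. This is not fatal, since concave functions are differentiable off a countable set and $s\mapsto H_{(h,\phi)}(\mmm_s(\sigma))$ is locally Lipschitz on $(0,\eps]$ (all eigenvalues of $\mmm_s(\sigma)$ are bounded away from $0$ there), so your integral representation and the a.e.\ pointwise comparison go through; but it does mean your write-up must carry these regularity caveats, while the paper's does not. In exchange, your argument is more conceptual: Schur convexity of the single scalar $G$ replaces the case analysis of \Cref{lem:ineq}, and $h$ and $\phi$ are handled uniformly rather than in two stages. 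Your equality and strictness analyses are correct as stated; in particular, equality of the two ordered nonnegative factors in a product does force equality of each factor, and strict monotonicity of $\phi'$ (from strict concavity) then yields $\lambda_+(\rho)=\lambda_+(\sigma)$ after letting $s\to 0^+$.
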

Note that if $h$ is not strictly concave, $\Delta_\eps$ need not be strictly Schur convex. In fact, for the von Neumann entropy we can find a counterexample to strict Schur convexity of $\Delta_\eps$. Setting $\rho = \diag(0.1,0.2,0.2,0.5)$ and $\sigma = \diag(0.1, 0.15, 0.25, 0.5)$ yields $\rho \prec \sigma$ and that $\rho$ and $\sigma$ are not unitarily equivalent. However, for $\eps \leq 0.05$, we have  $\Delta_\eps(\rho) = \Delta_\eps(\sigma)$. 
\begin{corollary}If $h$ is concave, $\phi$ strictly concave, and $\eps\in (0,1]$, then $\Delta_\eps$ achieves a maximum on $\D(\cH)$, and moreover $\argmax  \Delta_\eps = \Dpure(\cH) $. \label{cor:Delta-eps-max-at-pure}
\end{corollary}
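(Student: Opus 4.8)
The plan is to deduce everything from \Cref{thm:Delta_eps_Schur_convex}, exploiting the fact that pure states sit at the very top of the majorization order. First I would observe that for any pure state $\psi \in \Dpure(\cH)$ one has $\vec\lambda(\psi) = (1,0,\dotsc,0)$, and since $\vec\lambda(\omega)$ is a probability vector for every $\omega \in \D(\cH)$, its ordered partial sums satisfy $\sum_{j=1}^k \lambda_j(\omega) \leq 1 = \sum_{j=1}^k \lambda_j(\psi)$ for all $k$, with equality at $k=d$. By \eqref{def:majorize} this means $\omega \prec \psi$, i.e.\ every pure state majorizes every state.

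Next, by the Schur convexity established in \Cref{thm:Delta_eps_Schur_convex}, this majorization relation gives $\Delta_\eps(\omega) \leq \Delta_\eps(\psi)$ for all $\omega \in \D(\cH)$ and every pure $\psi$. Hence $\sup_{\omega \in \D(\cH)} \Delta_\eps(\omega) \leq \Delta_\eps(\psi)$, and since $\psi$ itself lies in $\D(\cH)$ the supremum is attained and equals $\Delta_\eps(\psi)$; this already shows that $\Delta_\eps$ achieves its maximum. Moreover all pure states share the spectrum $(1,0,\dotsc,0)$ and are therefore mutually unitarily equivalent, so applying Schur convexity in both directions shows $\Delta_\eps$ takes the same (maximal) value on all of them. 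This yields the inclusion $\Dpure(\cH) \subseteq \argmax \Delta_\eps$.

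For the reverse inclusion I would invoke the refined equality clause of \Cref{thm:Delta_eps_Schur_convex}. Suppose $\omega \in \argmax \Delta_\eps$ and fix any pure $\psi$. Then $\omega \prec \psi$ and $\Delta_\eps(\omega) = \Delta_\eps(\psi)$, so the theorem forces $\lambda_+(\omega) = \lambda_+(\psi) = 1$. But a density matrix whose largest eigenvalue equals $1$ must have all remaining eigenvalues equal to $0$, since these are nonnegative and sum to $0$; hence $\omega$ is rank one, i.e.\ $\omega \in \Dpure(\cH)$. Combining the two inclusions gives $\argmax \Delta_\eps = \Dpure(\cH)$.

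Since all the genuine work is already carried out in \Cref{thm:Delta_eps_Schur_convex}, I do not anticipate a real obstacle here. The only point requiring a little care is the direction $\argmax \Delta_\eps \subseteq \Dpure(\cH)$: plain Schur convexity is not enough to pin down the maximizers, and it is precisely the sharper conclusion $\lambda_+(\rho) = \lambda_+(\sigma)$ of \Cref{thm:Delta_eps_Schur_convex} (valid even when $h$ is only concave, not strictly concave) that rules out non-pure maximizers.
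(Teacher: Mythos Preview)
Your proposal is correct and follows essentially the same route as the paper's own proof: use \Cref{thm:Delta_eps_Schur_convex} together with the fact that pure states majorize every state to get $\Dpure(\cH)\subseteq\argmax\Delta_\eps$, and then use the equality clause $\lambda_+(\omega)=\lambda_+(\psi)=1$ to obtain the reverse inclusion. You have simply added a bit more detail (why pure states majorize all states, and why $\lambda_+(\omega)=1$ forces $\omega$ to be rank one) than the paper spells out.
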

\begin{proof}	
Since any pure state $\psi$ satisfies $\rho\prec \psi$ for every $\rho\in \D(\cH)$, we have $\Delta_\eps(\psi) \geq \Delta_\eps(\rho)$ for every $\rho\in \D(\cH)$. Therefore,  $\Dpure(\cH) \subset \argmax \Delta_\eps$. On the other hand, if $\omega\in \D(\cH)$ has $\omega \in \argmax \Delta_\eps$, then
\[
\Delta_\eps(\omega) = \Delta_\eps(\psi)
\]
for a pure state $\psi$. Therefore, $\lambda_+(\omega) = \lambda_+(\psi)=1$, and $\omega$ must be a pure state.
\end{proof}

\begin{figure}[ht]
\centering
\includegraphics[width=.408\textwidth]{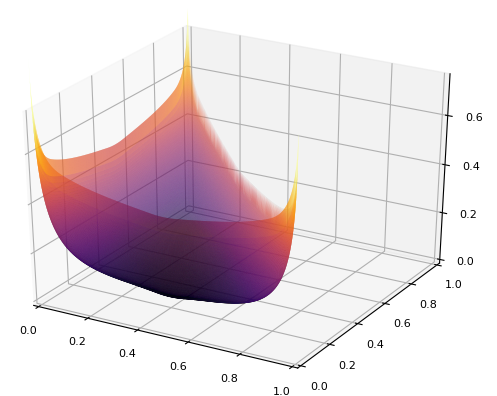}
\includegraphics[width=.492\textwidth]{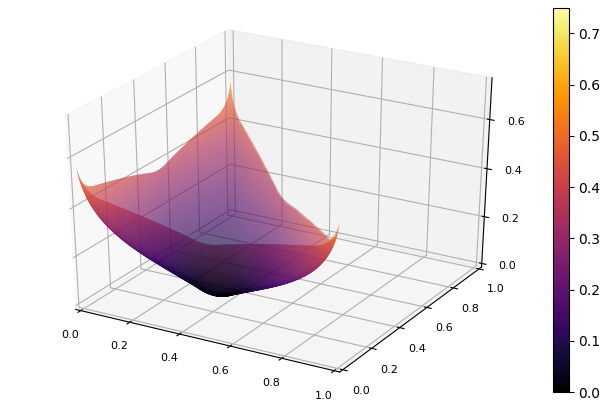}
\includegraphics[width=.408\textwidth]{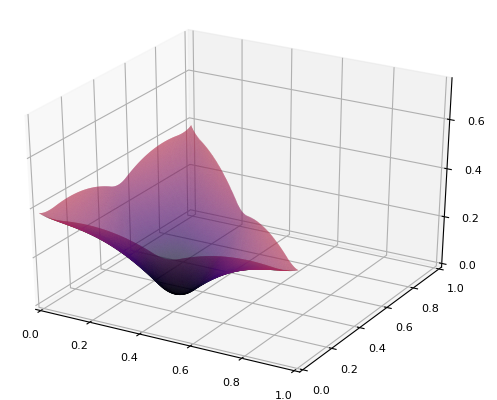}
\includegraphics[width=.492\textwidth]{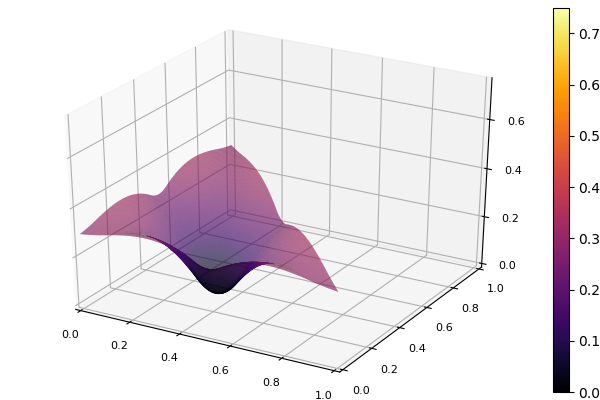}
\caption{In dimensions $d=3$, we parametrize $\sigma = \diag(x,y,1-x-y)$, and plot $(x,y) \mapsto \Delta_\eps(\sigma)$ for $\eps = 0.1$, with $H_{(h,\phi)} = S_\alpha$, the R\'enyi entropy. That is, above each $(x,y)$ in the $xy$-plane, the value of $\Delta_\eps(\diag(x,y,1-x-y))$ is plotted.  Top row, left: $\alpha = \frac{1}{2}$, right: $\alpha = 1$. Bottom row, left: $\alpha = 1.5$,  right: $\alpha = 2$. The three points $(0,0)$, $(0,1)$, $(1,0)$ in the $xy$-plane correspond to the pure states $\diag(0,0,1)$, $\diag(0,1,0)$, and $\diag(1,0,0)$, respectively. The central point $(\frac{1}{3},\frac{1}{3})$ corresponds to the completely mixed state $\tau = \frac{1}{3}\one$. We observe for $\alpha =\frac{1}{2}$ and $\alpha=1$ the maximum of $\Delta_\eps$ appears to occur at the pure states. On the other hand, for $\alpha=1.5$, the maximum is along the boundary (i.e.~for a state $\sigma$ with exactly one zero eigenvalue), and for $\alpha = 2$, the maximum occurs at states without any zero eigenvalues.} \label{fig:Delta_eps_for_Renyi}
\end{figure}

Using these results, the proof of \Cref{thm:hphi-GCB} is completed as follows. Let $\psi$ be any pure state, $\psi\in \Dpure(\cH)$. Then for any $\omega\in \cD(\cH)$, we have $\omega \prec \psi$. Therefore, by \Cref{thm:Delta_eps_Schur_convex}, we have $\Delta_\eps(\omega)\leq \Delta_\eps(\psi)$, for any $\omega\in \cD(\cH)$, and in particular for $\omega\in \{\rho,\sigma\}$. Therefore, by \eqref{eq:UBHrho-Hsig_by_max} we have
\[
|H_{(h,\phi)} (\rho) - H_{(h,\phi)}(\sigma)| \leq \Delta_\eps(\psi).
\]
 By computing $\Delta_\eps(\psi)$ using the form given in \Cref{prop:properties_of_Lambda_eps}\ref{item:Lambda-eps-on-psi}, we obtain the right-hand side of \cref{eq:hphi-uniform-bound}. 

 It remains to check under which conditions equality occurs in \eqref{eq:hphi-uniform-bound}. Assume without loss of generality that $H_{(h,\phi)}(\rho) \geq H_{(h,\phi)}(\sigma)$. Equality in \eqref{eq:UBHrho-Hsig_by_max} is equivalent to $\sigma\in \Dpure(\cH)$ by \Cref{cor:Delta-eps-max-at-pure}. Next, since the $(h,\phi)$-entropy is strictly Schur concave and $\sigma_\eps^* \prec \rho$, equality in \eqref{eq:Hrho-Hsig} is equivalent to the fact that $\rho$ is unitarily equivalent to $\sigma_\eps^*$.  The expression for $\sigma_\eps^*$ when $\sigma\in \Dpure(\cH)$ is given in \Cref{prop:properties_of_Lambda_eps}\ref{item:Lambda-eps-on-psi}. This completes the proof.\hfill\proofSymbol

\Cref{thm:Delta_eps_Schur_convex} does not extend to the $\alpha$-R\'enyi entropy for $\alpha > 1$, in which case $h$ is convex and $\phi$ strictly convex. This is discussed in the remark following \Cref{lem:reduce-to-id}, and is illustrated in \Cref{fig:Delta_eps_for_Renyi}.

\section{The majorization-minimizer map $\mmm_\eps$} \label{sec:Lambda-eps}

In order to prove \Cref{thm:hphi-GCB}, we need to use properties of the majorization-minimizer map $\mmm_\eps$ introduced in \eqref{eq:def_mmm-eps}. Let $\sigma\in \D(\cH)$ and $\eps\in (0,1]$. We formulate the definition of $\mmm_\eps$ by constructing $\sigma_\eps^*$. Note that the following is a reformulation of Lemma 4.1 of \cite{HD17}. For notational simplicity, we often suppress dependence on $\sigma$ and $\eps$ in this section, and write $\lambda_j = \lambda_j(\sigma)$ so that the eigenvalues of $\sigma$ are $\lambda_1 \geq \lambda_2 \geq \dotsm \geq \lambda_d$.

We first define a quantity $\gamma_+^{(m)}\equiv \gamma_+^{(m)}(\sigma,\eps)$, for $m\in \{0,1,\dotsc,d-1\}$, as follows
\[
\gamma_+^{(m)} := \begin{cases}
\frac{1}{m}\left(\sum_{i=1}^{m} \lambda_i - \eps\right) & \text{if }T(\sigma,\tau) > \eps \text{ and } m\neq 0\\
\frac{1}{d} & \text{else}.
\end{cases}
\]
Similarly, a quantity $\gamma_-^{(m)} \equiv \gamma_-^{(m)}(\sigma,\eps) $ is defined by
\[
\gamma_-^{(m)}:= \begin{cases}\frac{1}{m}\left(\sum_{i=d-m+1}^d \lambda_i + \eps\right) & \text{if } T(\sigma,\tau) > \eps \text{ and } m\neq 0\\
\frac{1}{d} & \text{else}.
\end{cases}
\]
Then for $\sigma \neq \tau$, we define $m_+ = m_+(\sigma,\eps)$ as the unique solution to the following inequalities:
\begin{equation}\label{eq:m_is_sol_to_this}
 \lambda_{m+1} \leq \gamma_+^{(m)} < \lambda_m, \qquad m\in \{1,\dotsc,d-1\} 
\end{equation}
and we set $m_+(\tau,\eps) = 0$.
Similarly, for $\sigma\neq \tau$, we define $m_- = m_-(\sigma,\eps)$ as the unique solution to the inequalities:
\begin{equation} \label{eq:n_is_sol_to_this}
 \lambda_{d-m+1}< \gamma_-^{(m)}\leq \lambda_{d-m}, \qquad  m\in \{1,\dotsc,d-1\}
\end{equation}
and set $m_-(\tau,\eps) = 0$. Finally, we set $\gamma_+ = \gamma_+(\sigma,\eps) := \gamma_+^{(m_+)}$ and  $\gamma_-=\gamma_-(\sigma,\eps) := \gamma_-^{(m_-)}$.

Given the eigen-decomposition $\sigma = \sum_{i=1}^d \lambda_i \ketbra{i}{i}$, we define
\begin{equation} \label{def:Lambdaeps}
\mmm_\eps(\sigma) := \sigma_\eps^* := \sum_{i=1}^{m_+} \gamma_+\ketbra{i}{i} + \sum_{i=m_++1}^{d-m_-} \lambda_i\ketbra{i}{i} + \sum_{i=d-m_-+1}^d \gamma_-\ketbra{i}{i}.
\end{equation}
To summarize, we construct $\sigma_\eps^*$ as follows: we decrease the $m_+$ largest eigenvalues of $\sigma$ by setting them to $\gamma_+$ (where $m_+$ and $\gamma_+$ are related by \cref{eq:m_is_sol_to_this}), increase the $m_-$ smallest eigenvalues of $\sigma$ by setting them to $\gamma_-$ (where $m_-$ and $\gamma_-$ are related by \cref{eq:n_is_sol_to_this}), and we keep the other eigenvalues of $\sigma$ unchanged. This is illustrated in \Cref{fig:sig-levels}, for a state $\sigma\in \cD(\cH)$ with $\eps = 0.07$ and $d=12$.
\begin{figure}[ht]
\centering
\includegraphics{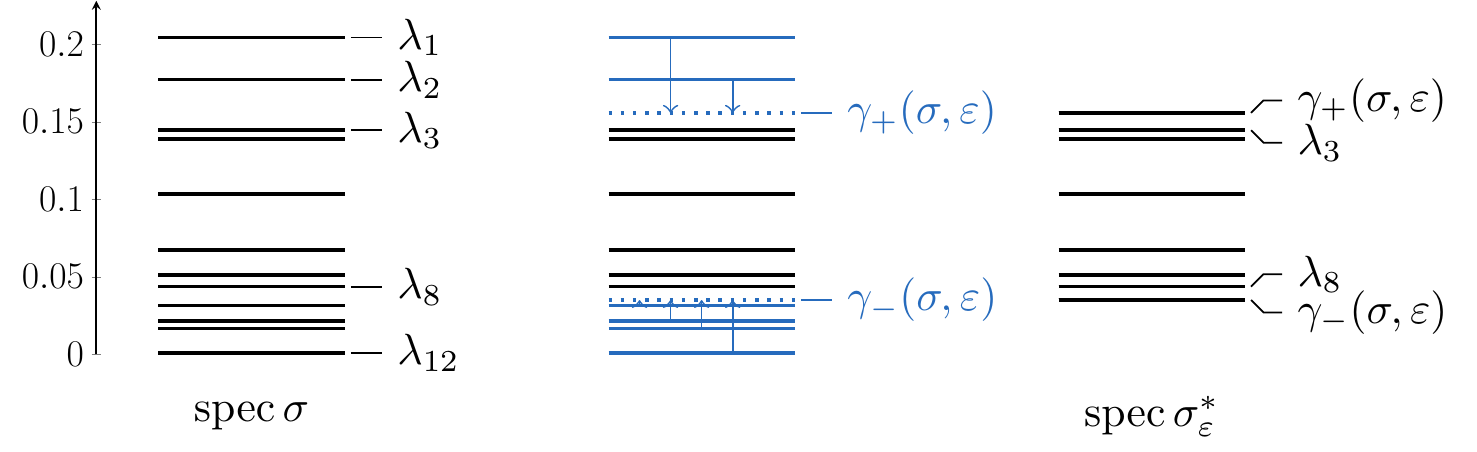}
\caption{We choose $d=12$, a state $\sigma \in \cD(\cH)$, and $\eps=0.07$, for which $m_+=2$ and $m_-=4$. Left: the eigenvalues $\lambda_1 \geq \lambda_2\geq \dotsc \geq \lambda_d$ of $\sigma$ are plotted. Center: the smallest four eigenvalues of $\sigma$ are increased to $\gamma_- = \frac{1}{4}[\lambda_1 + \dotsm + \lambda_4+  \eps]$, and the largest two eigenvalues of $\sigma$ decreased to $\gamma_+ = \frac{1}{2}[\lambda_1+\lambda_2 - \eps]$. Right: the eigenvalues of $\sigma_\eps^*$ are $\gamma_+$ with multiplicity two, $\lambda_3,\lambda_4,\dotsc,\lambda_{d-4}$, and $\gamma_-$ with multiplicity four. \label{fig:sig-levels}}
\end{figure}

Considered as a map on $\cD(\cH)$, $\mmm_\eps$ has several useful properties which are presented in the following proposition. It should be noted, however, that $\mmm_\eps$ is \emph{not} a linear map.

\begin{proposition}[Properties of $\mmm_\eps$] \label{prop:properties_of_Lambda_eps} Let $\sigma\in \D(\cH)$. We have the following properties of $\mmm_\eps$, for any $\eps\in (0,1]$.
\begin{enumerate}[label*=\alph*.,ref=(\alph*)]
	\item \label{item:Lambda_epsstates-to-states} Maps states to states: $\mmm_\eps : \D(\cH) \to \D(\cH)$. 
	\item \label{item:Lambda_eps_min_maj_order} Minimal in majorization order:   $\mmm_\eps(\sigma)\in \Be(\sigma)$ and for any $\omega\in \Be(\sigma)$, we have $\mmm_\eps(\sigma) \prec \omega$.
\item  \label{item:Lambda_eps-semigroup} Semi-group property:
if $\eps_1,\eps_2\in (0,1]$ with $\eps_1+\eps_2 \leq 1$, we have $\mmm_{\eps_1+\eps_2}(\sigma) = \mmm_{\eps_1} \circ \mmm_{\eps_2}(\sigma)$.
\item \label{item:Lambda_eps-maj-preserving} Majorization-preserving: let $\rho\in \D(\cH)$ such that $\rho \prec \sigma$. Then $\mmm_\eps(\rho) \prec \mmm_\eps(\sigma)$.
\item \label{item:Lambda_eps_fixed_point} $\tau = \frac{\one}{d}$ is the unique fixed point of $\mmm_\eps$, i.e.~the unique solution to $\sigma = \mmm_\eps(\sigma)$ for $\sigma \in \D(\cH)$. Moreover, for any $\sigma \neq \tau$, $\mmm_\eps(\sigma)$ is not unitarily equivalent to $\sigma$. 
\item \label{item:Lambda-eps-near-tau}For any state $\sigma \in \Be(\tau)$, we have $\mmm_\eps(\sigma) = \tau$.
\item \label{item:Lambda-eps-on-psi}For any pure state $\psi\in \Dpure(\cH)$, the state $\mmm_\eps(\psi)$ has the form
\begin{equation}
\mmm_\eps(\psi) = \begin{cases} 
\diag(1- \eps, \frac{\eps}{d-1},\dotsc \frac{\eps}{d-1}) & \eps < 1 - \frac{1}{d}\\
\tau := \frac{\one}{d} & \eps \geq 1 - \frac{1}{d}.
\end{cases} \label{eq:Lambda-eps-pure}
\end{equation}
\end{enumerate}
 
\end{proposition}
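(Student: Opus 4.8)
The plan is to establish the seven properties in dependency order, leaning on the explicit eigenvalue formula \eqref{def:Lambdaeps} and on the majorization-minimality \eqref{eq:maj-order} from \cite{HD17}. Properties (a) and (b) are a restatement of Lemma~4.1 of \cite{HD17}, so here the work is bookkeeping: for (a), trace preservation holds because the total decrease $\sum_{i\leq m_+}(\lambda_i-\gamma_+)=\eps$ among the top eigenvalues is exactly balanced by the total increase $\sum_{i>d-m_-}(\gamma_--\lambda_i)=\eps$ among the bottom ones (or $\mmm_\eps(\sigma)=\tau$ in the degenerate case), while positivity follows from $\gamma_->0$ and $\gamma_+\geq\lambda_{m_++1}\geq0$, the latter from \eqref{eq:m_is_sol_to_this}. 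For (b), the defining inequalities \eqref{eq:m_is_sol_to_this} and \eqref{eq:n_is_sol_to_this} give $\gamma_+\geq\lambda_{m_++1}$ and $\gamma_-\leq\lambda_{d-m_-}$, so the modified eigenvalues stay sorted and $T(\mmm_\eps(\sigma),\sigma)\leq\eps$; the majorization-minimality is exactly \eqref{eq:maj-order}, which I cite.

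I would next dispatch (f), (g), and (e) by direct computation. For (f), when $T(\sigma,\tau)\leq\eps$ the ``else'' branch forces $\gamma_+^{(m)}=\gamma_-^{(m)}=\tfrac1d$ for all $m$, so \eqref{eq:m_is_sol_to_this} and \eqref{eq:n_is_sol_to_this} pick out $m_+$ (resp.\ $m_-$) as the number of eigenvalues above (resp.\ below) $\tfrac1d$, and the construction sends every eigenvalue to $\tfrac1d$, i.e.\ $\mmm_\eps(\sigma)=\tau$. For (g) one notes $\vec\lambda(\psi)=(1,0,\dots,0)$ and $T(\psi,\tau)=1-\tfrac1d$: when $\eps\geq1-\tfrac1d$ part~(f) gives $\mmm_\eps(\psi)=\tau$, while when $\eps<1-\tfrac1d$ solving the inequalities yields $m_+=1,\ \gamma_+=1-\eps$ and $m_-=d-1,\ \gamma_-=\tfrac{\eps}{d-1}$, which is the stated form. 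Property (e) then follows: $\tau$ is fixed by (f), and for $\sigma\neq\tau$ the image has a different spectrum—if $T(\sigma,\tau)\leq\eps$ because $\mmm_\eps(\sigma)=\tau\neq\sigma$, and if $T(\sigma,\tau)>\eps$ because $m_+\geq1$ and \eqref{eq:m_is_sol_to_this} forces $\lambda_+(\mmm_\eps(\sigma))=\gamma_+<\lambda_{m_+}\leq\lambda_+(\sigma)$, so the two are not unitarily equivalent.

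The main obstacle is (d), the majorization-preservation, since the clipping windows $[\gamma_-,\gamma_+]$ of $\rho$ and of $\sigma$ differ and cannot be compared head-on. My plan is to sidestep this using the Hardy--Littlewood--P\'olya theorem together with (b). Since $\rho\prec\sigma$, there is a doubly stochastic $D$ with $\vec\lambda(\rho)=D\,\vec\lambda(\sigma)$. I set $v:=D\,\vec\lambda(\mmm_\eps(\sigma))$ and let $\omega$ be the state that is diagonal in an eigenbasis $\{\ket i\}$ of $\rho$ with entries $v_i$ aligned to the sorted $\lambda_i(\rho)$. Then $v\prec\vec\lambda(\mmm_\eps(\sigma))$ (a doubly stochastic image of a probability vector), so $\omega\prec\mmm_\eps(\sigma)$; and since $D$ is an $\ell_1$-contraction,
\[
2\,T(\omega,\rho)=\|\vec\lambda(\rho)-v\|_1=\bigl\|D\bigl(\vec\lambda(\sigma)-\vec\lambda(\mmm_\eps(\sigma))\bigr)\bigr\|_1\leq\|\sigma-\mmm_\eps(\sigma)\|_1\leq2\eps ,
\]
so $\omega\in\Be(\rho)$. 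Applying (b) to $\rho$ gives $\mmm_\eps(\rho)\prec\omega\prec\mmm_\eps(\sigma)$, which is (d). The one point needing care is the norm identity, which relies on placing $\omega$ co-diagonal with $\rho$ and on $\sigma$ and $\mmm_\eps(\sigma)$ being co-diagonal with commonly sorted spectra.

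Finally, for the semigroup property (c) I would split into the two majorization inequalities. Since $\mmm_{\eps_2}(\sigma)$ is within trace distance $\eps_2$ of $\sigma$ and $\mmm_{\eps_1}\!\circ\mmm_{\eps_2}(\sigma)$ within $\eps_1$ of $\mmm_{\eps_2}(\sigma)$, the triangle inequality places the latter within $\eps_1+\eps_2$ of $\sigma$, so (b) yields $\mmm_{\eps_1+\eps_2}(\sigma)\prec\mmm_{\eps_1}\!\circ\mmm_{\eps_2}(\sigma)$. The reverse relation follows from (b) once one knows $T\bigl(\mmm_{\eps_1+\eps_2}(\sigma),\mmm_{\eps_2}(\sigma)\bigr)\leq\eps_1$, i.e.\ that the clipping windows are nested and the moved mass adds up; this is the lone genuinely combinatorial estimate, and it is established in \cite{HD17}, which I cite.
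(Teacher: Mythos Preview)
Your proposal is correct and, for most items, simply fleshes out what the paper does by citation: the paper refers (a), (b) to \cite{HD17,HO17approxmaj}, (c) to \cite{HD17}, (d) to Lemma~2 of \cite{HO17approxmaj}, and derives (e), (f), (g) directly from the construction, which is exactly what you do for (a), (b), (e), (f), (g) with more detail. The genuine departure is your argument for (d): instead of quoting \cite{HO17approxmaj}, you take a doubly stochastic $D$ with $\vec\lambda(\rho)=D\vec\lambda(\sigma)$, push $\vec\lambda(\mmm_\eps(\sigma))$ through $D$ to build an intermediate $\omega\in\Be(\rho)$ via the $\ell_1$-contractivity of $D$, and then invoke (b). This is a clean reduction of (d) to (b) that the paper does not give; it is self-contained and conceptually transparent, whereas the paper's route outsources the work to \cite{HO17approxmaj}. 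Your handling of (c) is also slightly different: you obtain one majorization direction from the triangle inequality and (b), and only cite \cite{HD17} for the distance estimate $T(\mmm_{\eps_1+\eps_2}(\sigma),\mmm_{\eps_2}(\sigma))\le\eps_1$ needed for the other, whereas the paper simply cites (c) wholesale. One small point you should make explicit in (c): two-sided majorization only yields unitary equivalence, so to conclude equality of $\mmm_{\eps_1+\eps_2}(\sigma)$ and $\mmm_{\eps_1}\!\circ\mmm_{\eps_2}(\sigma)$ you should remark that both are, by construction, diagonal in the same sorted eigenbasis of $\sigma$.
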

The proof of properties \ref{item:Lambda_epsstates-to-states} and \ref{item:Lambda_eps_min_maj_order} can be found in \cite{HD17,HO17approxmaj}; the property \ref{item:Lambda_eps-semigroup} was proved in in \cite{HD17}, property \ref{item:Lambda_eps-maj-preserving} can be found in Lemma 2 of \cite{HO17approxmaj}. The property \ref{item:Lambda_eps_fixed_point} can be shown as follows. $\mmm_\eps(\rho)$ is not unitarily equivalent to $\rho$ for $\rho\neq\tau$ follows from the construction presented above, in particular, the fact that the eigenvalues of $\mmm_\eps(\rho)$ differ from $\rho$. One immediately has that $\tau$ is a fixed point of $\mmm_\eps$, and uniqueness follows from the fact that $\mmm_\eps(\sigma)$ is not unitarily equivalent to $\sigma$ for $\sigma\neq \tau$.  Lastly, the properties \ref{item:Lambda-eps-near-tau} and \ref{item:Lambda-eps-on-psi} follow from the construction given above.

 \section{Proof of \Cref{thm:Delta_eps_Schur_convex} \label{sec:proof-Delta-eps-Schur-convex}}

\subsection{Reducing to $h=\id$}
Our first task is to reduce to the case when $h=\id$, i.e. $h(x) = x$ for all $x\in \R$. Fix $\eps \in (0,1]$ and $\rho,\sigma \in \D(\cH)$ such that $\rho\prec \sigma$ and $\rho$ and $\sigma$ are not unitarily equivalent. Let us define four variables
\begin{gather*}	
a := H_{(\id,\phi)} \circ \mmm_{ \eps} (\rho) ,\qquad b :=  H_{(\id,\phi)}(\rho), \qquad
c := H_{(\id,\phi)} \circ \mmm_{\eps} (\sigma) ,\qquad d :=  H_{(\id,\phi)}(\sigma) 
\end{gather*}
which are non-negative real numbers.
\Cref{thm:Delta_eps_Schur_convex} is the statement that
\begin{equation} \label{eq:h-abcd}
h(a) - h(b) \leq h(c) - h(d).
\end{equation}
\begin{lemma}Let $h$ be concave, and $\phi$ strictly concave.
If $a-b \leq c-d$, then \eqref{eq:h-abcd} holds. Moreover, if $h$ is strictly concave, then \eqref{eq:h-abcd} holds with strict inequality.
\label{lem:reduce-to-id}
\end{lemma}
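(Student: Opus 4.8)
The plan is to exploit the orderings among the four numbers $a,b,c,d$ that follow from the majorization structure, together with the fact that $h$ is strictly increasing (which holds because $\phi$ is strictly concave, as required for an $(h,\phi)$-entropy). First I would record the two inequalities $a \geq b$ and $b \geq d$. The first holds because $\mmm_\eps(\rho) \prec \rho$ (take $\omega = \rho$ in \Cref{prop:properties_of_Lambda_eps}\ref{item:Lambda_eps_min_maj_order}); the second because $\rho \prec \sigma$. In both cases the entropy inequality follows from Schur concavity of $H_{(\id,\phi)} = \tr[\phi(\cdot)]$, which is strictly Schur concave since $\phi$ is strictly concave.

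The core of the argument is a two-step chain. Set $\delta := a - b \geq 0$, so that $a = b + \delta$, and note the hypothesis reads $\delta \leq c - d$. Step one uses the elementary property of concave functions recalled in \Cref{sec:elem-prop-concave-fun}: for $x \leq y$ the increment satisfies $h(x+\delta) - h(x) \geq h(y+\delta) - h(y)$. Applying this with $x = d \leq b = y$ gives
\[
h(a) - h(b) = h(b+\delta) - h(b) \leq h(d+\delta) - h(d).
\]
Step two uses monotonicity: since $\delta \leq c - d$ we have $d + \delta \leq c$, and because $h$ is increasing, $h(d+\delta) \leq h(c)$, whence $h(d+\delta) - h(d) \leq h(c) - h(d)$. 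Chaining the two steps yields exactly \eqref{eq:h-abcd}. The direction of the conclusion hinges on $h$ being increasing: a decreasing (concave) $h$ would reverse the inequality, which is precisely why assuming $\phi$ strictly concave (forcing $h$ increasing) is essential here.

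For the strict version I would use strict Schur concavity to upgrade the two steps. Since $\rho$ and $\sigma$ are not unitarily equivalent and $\rho \prec \sigma$, we have $b > d$ strictly. If $\delta > 0$, strict concavity of $h$ makes step one strict. If $\delta = 0$, then $H_{(\id,\phi)}(\mmm_\eps(\rho)) = H_{(\id,\phi)}(\rho)$ forces $\mmm_\eps(\rho)$ to be unitarily equivalent to $\rho$, hence $\rho = \tau$ by \Cref{prop:properties_of_Lambda_eps}\ref{item:Lambda_eps_fixed_point}; consequently $\sigma \neq \tau$, so $\mmm_\eps(\sigma) \prec \sigma$ is non-trivial and $c > d$, making step two strict because $h$ is strictly increasing. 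Either way the chain is strict.

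The main obstacle is essentially bookkeeping: keeping the directions of the concavity increment inequality and of the monotonicity step consistent, and verifying that the hypothesis $\delta \leq c - d$ is invoked in exactly the right place (step two rather than step one). The only genuinely delicate point is the strictness analysis, where the small case split on whether $\delta = 0$ leans on the characterization of $\tau$ as the unique fixed point of $\mmm_\eps$.
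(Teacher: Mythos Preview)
Your argument is correct, and it is organised differently from the paper's. The paper applies the slope comparison of \Cref{cor:main_concave_ieq} to the two chords $[d,b]$ and $[c,a]$ (using $d\leq c$, $b\leq a$), obtaining
\[
\frac{h(b)-h(d)}{b-d}\;\geq\;\frac{h(a)-h(c)}{a-c},
\]
and then uses the rearranged hypothesis $a-c\leq b-d$ together with $h(b)\geq h(d)$ to conclude. You instead use the ``decreasing increments'' form of concavity: you fix the step $\delta=a-b$, compare $h(d+\delta)-h(d)$ with $h(b+\delta)-h(b)$, and then invoke monotonicity of $h$ to absorb the residual $c-(d+\delta)\geq 0$. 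Both are one-line applications of concavity; your version has the small advantage that it never divides by $a-c$, so the degenerate case $a=c$ (which can occur, e.g.\ when $\mmm_\eps(\rho)$ and $\mmm_\eps(\sigma)$ both equal $\tau$) needs no separate treatment. Your strictness analysis via the dichotomy $\delta>0$ versus $\delta=0$ (the latter forcing $\rho=\tau$ by \Cref{prop:properties_of_Lambda_eps}\ref{item:Lambda_eps_fixed_point}) is also more explicit than what the paper writes out.
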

\begin{proof}	
By the strict Schur concavity of the $(\id,\phi)$-entropy, we have  $b<a$ and $d<c$, and by \Cref{prop:properties_of_Lambda_eps} \ref{item:Lambda_eps_min_maj_order}, we have $b> d$ and $a> c$. Therefore, since $h$ is concave, we apply \Cref{cor:main_concave_ieq} to obtain
\[
\frac{h(b) - h(d)}{b-d} \geq \frac{h(a) - h(c)}{a-c}.
\]
That is,
\[
[h(b) - h(d)] \frac{a-c}{b-d} \geq h(a) -h(c)
\]
Since we have $a-c \leq b - d$ using the assumption, then $\frac{a-c}{b-d} \leq 1$, and therefore
\[
h(b) - h(d) \geq h(a) -h(c)
\]
and adding $h(c) - h(b)$ to each side yields \eqref{eq:h-abcd}.
\end{proof}
Therefore, it remains to establish $a-b\leq c-d$, which is \Cref{thm:Delta_eps_Schur_convex} when $h=\id$.

\begin{remark}An extension of \Cref{thm:hphi-GCB} to treat the $\alpha$-R\'enyi entropy for $\alpha > 1$ would need to address the case in which $h$ is convex and strictly decreasing, and $\phi$ is strictly convex. In this case, $\rho \mapsto \tr \phi(\rho)$ is Schur convex, and we have  $a<b$, $c<d$, $b<d$, and $a<c$. The analog to \Cref{lem:reduce-to-id} would be to show that $a-b \geq c-d$ implies \eqref{eq:h-abcd}. However, repeating the proof of \Cref{lem:reduce-to-id} in this case yields e.g.
\[
[h(b)- h(d)] \frac{c-a}{d-b} \leq h(a)-h(c)
\]
which is inconclusive in showing \eqref{eq:h-abcd} when $a-b \geq c-d$. This is the technical reason this proof does not extend to the $\alpha$-R\'enyi entropy for $\alpha > 1$. 

In fact, the associated quantity $\Delta_\eps$ for an $\alpha$-R\'enyi entropy with $\alpha>1$ is not Schur convex. For the example stated after \Cref{thm:Delta_eps_Schur_convex}, it can be shown that choosing $H_{(h,\phi)} = S_\alpha$ for $\alpha > 1$ yields $\Delta_\eps(\rho) > \Delta_\eps(\sigma)$.
\end{remark}

\subsection{The case $h=\id$}

We prove \Cref{thm:Delta_eps_Schur_convex} in several steps. First, we use the semigroup property of $\mmm_\eps$ to decompose $\Delta_{\eps}$ for $\eps=\eps_1+\eps_2$ in terms of $\eps_1$ and $\eps_2$ in \Cref{lem:Delta_eps1_plus_eps2}. Then we define a quantity $\delta(\rho,\sigma)$ in \Cref{def:delta-rho-sigma} such that for $\eps \leq \delta(\rho,\sigma)$, we can show that $\Delta_\eps(\rho) \leq \Delta_\eps(\sigma)$ if $\rho\prec \sigma$ (\Cref{lem:Delta_eps_Schur_convex_deltarho-sig}), using properties of $\delta(\rho,\sigma)$ presented in \Cref{lem:delta_rho}. Finally, we show  that for arbitrary $\eps\in (0,1]$, we can use \Cref{lem:Delta_eps1_plus_eps2} finitely many times to prove \Cref{thm:Delta_eps_Schur_convex}. We state the lemmas here but defer their proofs to \Cref{sec:proof_lemmas}.

\begin{lemma} \label{lem:Delta_eps1_plus_eps2}
Let $\rho\in \D(\cH)$, and $\eps_1,\eps_2\in (0,1]$ with $\eps_1+\eps_2 \leq 1$. Then
\[
\Delta_{\eps_1+\eps_2}(\rho) = \Delta_{\eps_1} \circ \mmm_{\eps_2}(\rho) + \Delta_{\eps_2}(\rho).
\]
\end{lemma}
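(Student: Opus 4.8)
The plan is to obtain this identity as a direct consequence of the semigroup property of the majorization-minimizer map, \Cref{prop:properties_of_Lambda_eps}\ref{item:Lambda_eps-semigroup}, together with the definition \eqref{eq:def_Delta-eps} of $\Delta_\eps$ as a difference of entropies. The only genuine input is the semigroup identity $\mmm_{\eps_1+\eps_2} = \mmm_{\eps_1}\circ\mmm_{\eps_2}$; everything else is a telescoping rearrangement. In particular, no concavity assumption on $h$ is needed, so the lemma holds for any $(h,\phi)$-entropy.

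First I would unfold the definition of $\Delta_{\eps_1+\eps_2}(\rho)$ and apply the semigroup property to the inner map, which is permitted precisely because $\eps_1,\eps_2\in(0,1]$ with $\eps_1+\eps_2\leq 1$:
\[
\Delta_{\eps_1+\eps_2}(\rho) = H_{(h,\phi)}\circ\mmm_{\eps_1+\eps_2}(\rho) - H_{(h,\phi)}(\rho) = H_{(h,\phi)}\circ\mmm_{\eps_1}\circ\mmm_{\eps_2}(\rho) - H_{(h,\phi)}(\rho).
\]
Next I would insert the intermediate term $H_{(h,\phi)}\circ\mmm_{\eps_2}(\rho)$ by adding and subtracting it, splitting the right-hand side into two differences:
\[
\Delta_{\eps_1+\eps_2}(\rho) = \Bigl[ H_{(h,\phi)}\circ\mmm_{\eps_1}\bigl(\mmm_{\eps_2}(\rho)\bigr) - H_{(h,\phi)}\bigl(\mmm_{\eps_2}(\rho)\bigr) \Bigr] + \Bigl[ H_{(h,\phi)}\circ\mmm_{\eps_2}(\rho) - H_{(h,\phi)}(\rho) \Bigr].
\]
Writing $\omega := \mmm_{\eps_2}(\rho)$, which is again a state by \Cref{prop:properties_of_Lambda_eps}\ref{item:Lambda_epsstates-to-states}, the first bracket is exactly $\Delta_{\eps_1}(\omega) = \Delta_{\eps_1}\circ\mmm_{\eps_2}(\rho)$ and the second is $\Delta_{\eps_2}(\rho)$, both by \eqref{eq:def_Delta-eps}. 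This gives the claimed identity.

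I do not expect any substantive obstacle here: the statement is essentially a one-line consequence of the semigroup structure once the telescoping is arranged. The only point warranting explicit mention is that $\mmm_{\eps_2}(\rho)$ must itself be a valid argument for $\mmm_{\eps_1}$ and hence for $\Delta_{\eps_1}$, i.e.\ that it is a density matrix, which is precisely the content of \Cref{prop:properties_of_Lambda_eps}\ref{item:Lambda_epsstates-to-states}; and that the hypothesis $\eps_1+\eps_2\leq 1$ is exactly the condition under which the semigroup property may be invoked.
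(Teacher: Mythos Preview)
Your proof is correct and follows essentially the same approach as the paper: expand the definition of $\Delta_{\eps_1+\eps_2}$, apply the semigroup property \Cref{prop:properties_of_Lambda_eps}\ref{item:Lambda_eps-semigroup}, and telescope by adding and subtracting $H_{(h,\phi)}\circ\mmm_{\eps_2}(\rho)$. The only difference is that you explicitly note $\mmm_{\eps_2}(\rho)\in\D(\cH)$ via \Cref{prop:properties_of_Lambda_eps}\ref{item:Lambda_epsstates-to-states}, which the paper leaves implicit.
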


\begin{definition}[$\delta(\rho,\sigma)$] \label{def:delta-rho-sigma}Let $\rho\in \D(\cH)$ for $\rho \neq \tau$.
Let $\mu_1 > \mu_2 > \dotsm > \mu_\ell$ denote the distinct ordered eigenvalues of $\rho$, and define 
\begin{equation}
\delta(\rho) =\min \{k_+(\rho) (\mu_1-\mu_2), k_-(\rho)(\mu_{\ell-1}-\mu_\ell) \}. \label{eq:delta_rho}
\end{equation}
For $\rho,\sigma\in\D(\cH)$ with $\rho\neq\tau\neq\sigma$, define
\begin{equation}
\delta(\rho,\sigma) = \min \{ \delta(\rho), \delta(\sigma) \}.
\label{eq:def-delta-rho-sigma}
\end{equation}
\end{definition}

For any $\eps \leq \delta(\rho,\sigma)$, the map $\mmm_\eps$ only ``moves'' the largest and smallest eigenvalue of $\rho$ and of $\sigma$, as shown by the following result and illustrated through an example in \Cref{fig:Lambda-eps}.
\begin{figure}[ht]
\centering
\includegraphics[width=.75\textwidth]{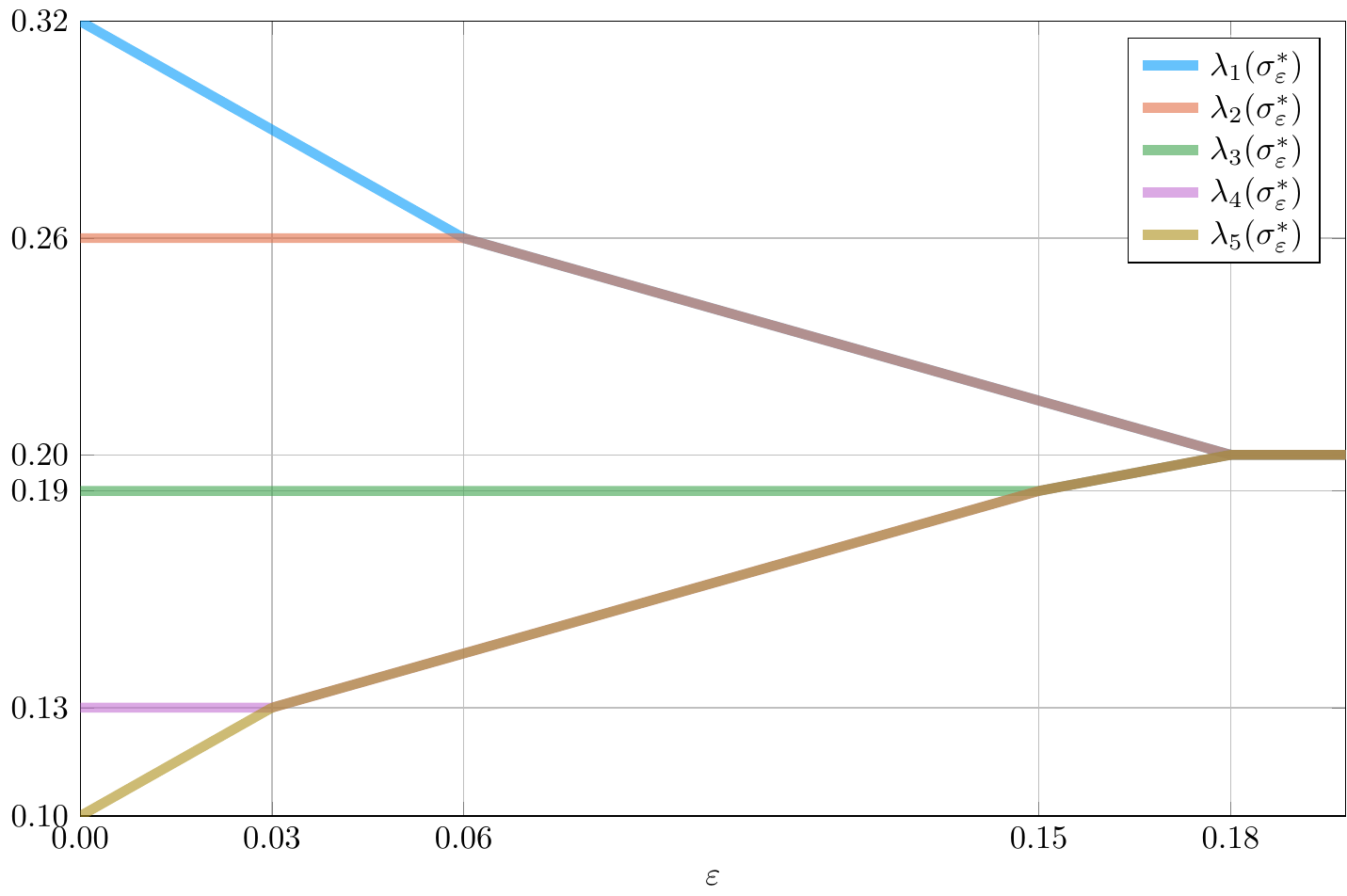}
\caption{For the 5-dimensional state $\sigma = \diag(0.32, 0.26, 0.19, 0.13, 0.10)$, the spectrum of $\sigma_\eps^* = \mmm_\eps(\sigma)$ is plotted as a function of $\eps$. This plot is a continuous (in $\eps$) analog to the type of plot shown in \Cref{fig:sig-levels}, which shows the spectrum of $\sigma_\eps^*$ at two discrete points, $\eps = 0$ and $\eps= 0.07$, in a different example. 
Here, at $\eps=0$, the five lines correspond to the five eigenvalues of $\sigma$, each with multiplicity one. For $\eps \leq 0.03$, $\sigma_\eps^* = \diag(0.32 -\eps, 0.26, 0.19, 0.13, 0.10+\eps)$ and differs from $\sigma$ only in the smallest and largest eigenvalue. When $\eps$ reaches $0.03$, the multiplicity of the smallest eigenvalue of $\sigma_\eps^*$ increases to 2. Between $\eps=  0.03$ and $\eps = 0.06$, again only the smallest and largest eigenvalues change, but the smallest eigenvalue has multiplicity 2. This process continues until every eigenvalue reaches $\frac{1}{d}=0.2$ at $T(\sigma,\tau) = 0.18$.} \label{fig:Lambda-eps}
\end{figure}

\begin{lemma} \label{lem:delta_rho}
Let $\rho \neq \tau$. For any $\eps \leq \delta(\rho)$, we have
\[
m_+(\rho,\eps) = k_+(\rho), \quad\text{and}\quad m_-(\rho,\eps) = k_-(\rho).
\]
Moreover, if $\eps = \delta(\rho)$ then either $k_+ (\mmm_{\eps}(\rho)) > k_+ (\rho)$ or $k_- (\mmm_{\eps}(\rho)) > k_- (\rho)$.
\end{lemma}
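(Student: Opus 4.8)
The plan is to read off both equalities directly from the defining relations \eqref{eq:m_is_sol_to_this} and \eqref{eq:n_is_sol_to_this} by substituting the candidate values $m=k_+(\rho)$ and $m=k_-(\rho)$ and invoking the uniqueness of their solutions. The construction of $\sigma_\eps^*$ in \eqref{def:Lambdaeps} is symmetric under flipping the spectrum, which interchanges the roles of $m_+,\gamma_+,k_+,\mu_1,\mu_2$ with $m_-,\gamma_-,k_-,\mu_\ell,\mu_{\ell-1}$, so it suffices to prove $m_+(\rho,\eps)=k_+(\rho)$; the statement for $m_-$ then follows verbatim from \eqref{eq:n_is_sol_to_this}. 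Writing $\lambda_i=\lambda_i(\rho)$, so that $\lambda_1=\dots=\lambda_{k_+}=\mu_1$ and $\lambda_{k_++1}=\mu_2$, I would record from \eqref{eq:delta_rho} that $\eps\le\delta(\rho)\le k_+(\mu_1-\mu_2)$.

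First I would treat the generic regime $T(\rho,\tau)>\eps$, in which the top branch of $\gamma_+^{(m)}$ applies, and evaluate at $m=k_+$ to get $\gamma_+^{(k_+)}=\mu_1-\eps/k_+$. Since $\eps>0$ this is strictly below $\mu_1=\lambda_{k_+}$, and since $\eps\le k_+(\mu_1-\mu_2)$ it is at least $\mu_2=\lambda_{k_++1}$; thus $m=k_+$ satisfies \eqref{eq:m_is_sol_to_this} and by uniqueness $m_+=k_+$. I then have to dispatch the degenerate regime $T(\rho,\tau)\le\eps$, where $\gamma_+^{(m)}=1/d$ for every $m$ and $\mmm_\eps(\rho)=\tau$ by \Cref{prop:properties_of_Lambda_eps}\ref{item:Lambda-eps-near-tau}. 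Here \eqref{eq:m_is_sol_to_this} reduces to $\lambda_{m+1}\le \tfrac1d<\lambda_m$, so $m_+$ equals the number of eigenvalues of $\rho$ strictly exceeding $1/d$, and I must check this number is $k_+$, i.e.\ that $\mu_2\le \tfrac1d<\mu_1$. The right inequality holds because $\rho\neq\tau$ forces $\mu_1>\tfrac1d$; for the left, I would combine $T(\rho,\tau)=\sum_{i:\lambda_i>1/d}(\lambda_i-\tfrac1d)\ge k_+(\mu_1-\tfrac1d)$ with $T(\rho,\tau)\le\delta(\rho)\le k_+(\mu_1-\mu_2)$ to force $\mu_1-\tfrac1d\le\mu_1-\mu_2$, hence $\mu_2\le\tfrac1d$.

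For the ``moreover'' claim, set $\eps=\delta(\rho)$ and assume without loss of generality that the minimum in \eqref{eq:delta_rho} is attained by the first term, i.e.\ $\eps=k_+(\mu_1-\mu_2)$ (otherwise flip the spectrum and argue that $k_-$ increases). If this value of $\eps$ lands in the degenerate regime $T(\rho,\tau)\le\eps$, then $\mmm_\eps(\rho)=\tau$ has every eigenvalue equal to $1/d$, so $k_+(\mmm_\eps(\rho))=k_-(\mmm_\eps(\rho))=d$, which exceeds both $k_+(\rho)$ and $k_-(\rho)$ since $\rho\neq\tau$. Otherwise $T(\rho,\tau)>\eps$, and then $\gamma_+=\mu_1-\eps/k_+=\mu_2=\lambda_{k_++1}$, so in \eqref{def:Lambdaeps} the top $k_+$ eigenvalues are lowered exactly onto the next distinct value $\mu_2$. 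A short sub-argument shows that in this regime necessarily $\ell\ge 3$: for $\ell=2$ one checks directly (using $T(\rho,\tau)=k_+(\mu_1-\tfrac1d)$ and $\mu_2<\tfrac1d$) that $\delta(\rho)>T(\rho,\tau)$, so the boundary falls into the degenerate case. For $\ell\ge 3$ the level $\mu_2\neq\mu_\ell$ is interior and disjoint from the $m_-=k_-$ smallest eigenvalues that are raised; moreover $\gamma_-\le\lambda_{d-k_-}=\mu_{\ell-1}\le\mu_2$, so no eigenvalue is pushed above $\mu_2$. At least one copy of $\mu_2$ therefore survives unaltered in the middle block and merges with the lowered eigenvalues, giving $k_+(\mmm_\eps(\rho))=k_++k_2>k_+(\rho)$, where $k_2$ is the multiplicity of $\mu_2$.

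The main obstacle is this last step: the multiplicity bookkeeping at the boundary $\eps=\delta(\rho)$, where I must verify that the newly-created top level of $\mmm_\eps(\rho)$ is not simultaneously consumed by the adjustment acting on the smallest eigenvalues. Cleanly separating the case in which $\rho$ has only two distinct eigenvalues (where the top and bottom modifications interact and the boundary collapses into the degenerate regime $\mmm_\eps(\rho)=\tau$) from the case $\ell\ge 3$ (where the two levels remain disjoint) is the crux. By contrast, the substitution arguments establishing $m_\pm=k_\pm$ away from the boundary are routine, once the two branches of $\gamma_\pm^{(m)}$ are handled separately.
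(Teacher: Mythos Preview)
Your proof is correct and follows essentially the same approach as the paper's: substitute $m=k_+$ into \eqref{eq:m_is_sol_to_this}, handle the two regimes $T(\rho,\tau)>\eps$ and $T(\rho,\tau)\le\eps$ separately, and at the boundary $\eps=\delta(\rho)$ compute $\gamma_+=\mu_2$ to see the top level merge with the next one. Your treatment is in fact more careful than the paper's: you explicitly show that the two-level case $\ell=2$ necessarily falls into the degenerate regime $\mmm_\eps(\rho)=\tau$ and that $\gamma_-\le\mu_{\ell-1}\le\mu_2$ cannot interfere with the new top level, points the paper leaves implicit.
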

Using this result, we can prove the Schur convexity of $\Delta_\eps$ for $\eps$ small enough (depending on $\rho$ and $\sigma$).
\begin{lemma}\label{lem:Delta_eps_Schur_convex_deltarho-sig}
Let $\rho,\sigma \in \D(\cH)$ with $\rho \prec \sigma$. Let $\eps \leq \delta(\rho,\sigma)$, defined by \eqref{eq:def-delta-rho-sigma}. Then
\begin{equation} \label{eq:Delta_schur_convex_eps_small}
\Delta_\eps(\rho) \leq \Delta_\eps(\sigma).
\end{equation}
Moreover, equality in \eqref{eq:Delta_schur_convex_eps_small} implies that $\lambda_{\pm}(\rho) = \lambda_\pm(\sigma)$.
\end{lemma}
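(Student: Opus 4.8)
The plan is to use \Cref{lem:delta_rho} to obtain an explicit formula for $\Delta_\eps$ on both states in the regime $\eps \le \delta(\rho,\sigma)$, and then prove the inequality by a \emph{pointwise} comparison after writing $\Delta_\eps$ as an integral over the transported mass. First I would dispose of the degenerate cases: if $\rho=\tau$ or $\sigma=\tau$ then $\rho\prec\sigma$ forces the relevant state to be fixed by $\mmm_\eps$, and since $\Delta_\eps\ge 0$ the claim is immediate; so assume $\rho\ne\tau\ne\sigma$, which is exactly where $\delta$ is defined. Since $t\le\eps\le\delta(\omega)$ for all $t\in[0,\eps]$ and $\omega\in\{\rho,\sigma\}$, \Cref{lem:delta_rho} guarantees that throughout this range $\mmm_t$ acts only on the extreme eigenvalues, with fixed multiplicities $k_\pm(\omega)$, lowering the top level to $\lambda_+-t/k_+$ and raising the bottom level to $\lambda_-+t/k_-$. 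With $h=\id$ this yields, for $\omega\in\{\rho,\sigma\}$ (writing $\lambda_\pm=\lambda_\pm(\omega)$, $k_\pm=k_\pm(\omega)$),
\[
\Delta_\eps(\omega) = k_+\Big[\phi\big(\lambda_+ - \tfrac{\eps}{k_+}\big) - \phi(\lambda_+)\Big] + k_-\Big[\phi\big(\lambda_- + \tfrac{\eps}{k_-}\big) - \phi(\lambda_-)\Big],
\]
and, differentiating in $t$ (using that the right derivative $\phi'$ exists everywhere, is non-increasing for concave $\phi$, and strictly decreasing for strictly concave $\phi$),
\[
\Delta_\eps(\omega) = \int_0^\eps\Big[\phi'\big(\lambda_- + \tfrac{t}{k_-}\big) - \phi'\big(\lambda_+ - \tfrac{t}{k_+}\big)\Big]\,dt \;=:\; \int_0^\eps g_\omega(t)\,dt.
\]

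The key step is then to establish $g_\sigma(t)\ge g_\rho(t)$ for all $t\in[0,\eps]$, from which $\Delta_\eps(\rho)\le\Delta_\eps(\sigma)$ follows by integration. Writing $a=\lambda_+(\rho)\le A=\lambda_+(\sigma)$ and $b=\lambda_-(\rho)\ge B=\lambda_-(\sigma)$ (these come from $\rho\prec\sigma$ applied to the first and to the last partial sum), the difference $g_\sigma(t)-g_\rho(t)$ splits into a top part and a bottom part. Because $\phi'$ is non-increasing, the bottom part is nonnegative precisely when $B+t/k_-(\sigma)\le b+t/k_-(\rho)$ and the top part precisely when $a-t/k_+(\rho)\le A-t/k_+(\sigma)$. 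When the multiplicities are favourable, i.e.\ $k_+(\rho)\le k_+(\sigma)$ (resp.\ $k_-(\rho)\le k_-(\sigma)$), these hold trivially since $a\le A$ (resp.\ $b\ge B$).

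The main obstacle is the case of \emph{unfavourable} multiplicities, e.g.\ $k_+(\rho)>k_+(\sigma)$, which genuinely occurs under majorization and blocks any naive term-by-term monotonicity argument. I would resolve it by sharpening the bound on $a$: applying $\rho\prec\sigma$ to the top $k_+(\rho)$ eigenvalues and bounding the surplus eigenvalues of $\sigma$ by its second-largest distinct eigenvalue $\mu_2(\sigma)$ gives $k_+(\rho)\,a \le k_+(\sigma)A + (k_+(\rho)-k_+(\sigma))\mu_2(\sigma)$, hence $A-a \ge \tfrac{k_+(\rho)-k_+(\sigma)}{k_+(\rho)}\big(A-\mu_2(\sigma)\big)$. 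Combined with $\eps\le\delta(\sigma)\le k_+(\sigma)\big(A-\mu_2(\sigma)\big)$ this is exactly what is needed to conclude $A-a\ge t\big(\tfrac{1}{k_+(\sigma)}-\tfrac{1}{k_+(\rho)}\big)$ for all $t\le\eps$, i.e.\ the top inequality. The bottom inequality follows symmetrically, using $\mu_{\ell-1}(\sigma)$ and $\eps\le\delta(\sigma)\le k_-(\sigma)\big(\mu_{\ell-1}(\sigma)-B\big)$. Thus $g_\sigma\ge g_\rho$ pointwise, proving \eqref{eq:Delta_schur_convex_eps_small}.

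Finally, for the equality statement I would argue that $\Delta_\eps(\rho)=\Delta_\eps(\sigma)$ forces $g_\sigma(t)=g_\rho(t)$ for a.e.\ $t\in[0,\eps]$; since $g_\sigma-g_\rho$ is a sum of two nonnegative (top and bottom) contributions, each must vanish on a set of positive measure. Strict monotonicity of $\phi'$ (from strict concavity of $\phi$) then forces the affine functions $t\mapsto a-t/k_+(\rho)$ and $t\mapsto A-t/k_+(\sigma)$ to agree on that set, hence identically, giving $\lambda_+(\rho)=\lambda_+(\sigma)$; the bottom contribution gives $\lambda_-(\rho)=\lambda_-(\sigma)$ in the same way. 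This yields the asserted necessary condition $\lambda_\pm(\rho)=\lambda_\pm(\sigma)$.
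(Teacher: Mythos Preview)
There is a genuine gap in your derivation of the explicit formula for $\Delta_\eps(\omega)$. \Cref{lem:delta_rho} only guarantees $m_\pm(\omega,\eps)=k_\pm(\omega)$ for $\eps\le\delta(\omega)$; it does \emph{not} follow that the top level is lowered to $\lambda_+(\omega)-\eps/k_+(\omega)$. By the definition of $\gamma_+^{(m)}$ in \Cref{sec:Lambda-eps}, one has $\gamma_+(\omega,\eps)=\lambda_+(\omega)-\eps/k_+(\omega)$ only when $T(\omega,\tau)>\eps$; otherwise $\gamma_\pm(\omega,\eps)=\tfrac{1}{d}$ and $\mmm_\eps(\omega)=\tau$. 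The regime $T(\omega,\tau)<\eps\le\delta(\omega)$ is non\-empty: for any $\omega\ne\tau$ with exactly two distinct eigenvalues one checks $\delta(\omega)=T(\omega,\tau)\cdot d/\max(k_+,k_-)>T(\omega,\tau)$. In that regime your formula lets the top and bottom levels overshoot past $\tfrac{1}{d}$ and cross, so $\int_0^\eps g_\omega$ is strictly smaller than the true $\Delta_\eps(\omega)=H_{(\id,\phi)}(\tau)-H_{(\id,\phi)}(\omega)$. Since this underestimates $\Delta_\eps(\rho)$, the pointwise bound $g_\sigma\ge g_\rho$ you establish no longer yields \eqref{eq:Delta_schur_convex_eps_small}. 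The paper confronts exactly this issue via the three-case analysis on the relative sizes of $\eps$, $T(\rho,\tau)$, $T(\sigma,\tau)$ in \Cref{lem:ineq}.

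Your argument is easily repaired in the same spirit: set $\tilde g_\omega(t):=g_\omega(t)\,\one\{t\le T(\omega,\tau)\}$, which is the true derivative of $t\mapsto\Delta_t(\omega)$ on $[0,\eps]$. Then $\tilde g_\sigma\ge\tilde g_\rho$ pointwise follows from your inequality on $[0,T(\rho,\tau)]$ (where both equal $g_\omega$), from $g_\sigma\ge 0$ on $(T(\rho,\tau),T(\sigma,\tau)]$, and trivially beyond; the equality analysis needs the corresponding adjustment. Apart from this, your method is a legitimate alternative to the paper's: the paper works with finite differences and the slope inequality of \Cref{cor:main_concave_ieq}, and obtains $\gamma_+(\rho,\eps)\le\gamma_+(\sigma,\eps)$ for free from the majorization-preservation property $\mmm_\eps\rho\prec\mmm_\eps\sigma$ (\Cref{prop:properties_of_Lambda_eps}\ref{item:Lambda_eps-maj-preserving}); you instead differentiate, compare integrands via monotonicity of $\phi'$, and handle the multiplicity mismatch by a direct refined majorization estimate.
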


We can iterate this result using \Cref{lem:Delta_eps1_plus_eps2} to prove \Cref{thm:Delta_eps_Schur_convex} in general.
\paragraph{Proof of \Cref{thm:Delta_eps_Schur_convex}}
Let $\rho,\sigma\in \D(\cH)$ and $\eps\in(0,1]$.  Note that if $\sigma = \tau$, then $\rho \prec \sigma$ implies $\rho = \tau$, and hence $\Delta_\eps(\rho) = 0 = \Delta_\eps(\sigma)$. If $\rho = \tau \neq \sigma$, then $\Delta_\eps(\rho) = 0 < \Delta_\eps(\sigma)$ by the strict Schur concavity of the $H_{(h,\phi)}$ entropy. Therefore, we can assume $\rho \neq \tau\neq \sigma$.
\begin{enumerate}[{Step }1.]
	\item  Set $\rho_1=\rho$ and $\sigma_1=\sigma$. If $\eps \leq \delta_1:=\delta(\rho_1,\sigma_1)$, we conclude via \Cref{lem:Delta_eps_Schur_convex_deltarho-sig}. Otherwise, set $\eps_1 = \eps- \delta_1$. Then by \Cref{lem:Delta_eps1_plus_eps2},
\[
\Delta_{\eps_1+\delta_1}(\rho_1) = \Delta_{\eps_1} \circ \mmm_{\delta_1}(\rho_1) + \Delta_{\delta_1}(\rho_1)
\]
and
\[
\Delta_{\eps_1+\delta_1}(\sigma_1) = \Delta_{\eps_1} \circ \mmm_{\delta_1}(\sigma_1) + \Delta_{\delta_1}(\sigma_1).
\]
 We invoke \Cref{lem:Delta_eps_Schur_convex_deltarho-sig} to find $\Delta_{\delta_1}(\rho_1) \leq \Delta_{\delta_1}(\sigma_1)$; it remains to show
\[
\Delta_{\eps_1} ( \mmm_{\delta_1}(\rho_1))\leq \Delta_{\eps_1} ( \mmm_{\delta_1}(\sigma_1)).
\]
\item Set $\rho_2 = \mmm_{\delta_1}(\rho_1)$ and $\sigma_2 = \mmm_{\delta_1}(\sigma_1)$. If either $\rho_2 = \tau$ or $\sigma_2 = \tau$ we conclude by the argument presented at the start of the proof. Otherwise, we set $\delta_2 := \delta(\rho_2,\sigma_2)$ and proceed.

If $\eps_1 \leq \delta_2$, we conclude by \Cref{lem:Delta_eps_Schur_convex_deltarho-sig}. Otherwise, set $\eps_2 = \eps_1 - \delta_2$, expand e.g. 
\[
\Delta_{\eps_2+\delta_2}(\rho_2) = \Delta_{\eps_2} \circ \mmm_{\delta_2}(\rho_2) + \Delta_{\delta_2}(\rho_2), \qquad 
\Delta_{\eps_2+\delta_2}(\sigma_2) = \Delta_{\eps_2} \circ \mmm_{\delta_2}(\sigma_2) + \Delta_{\delta_2}(\sigma_2), \qquad 
\]
and conclude $\Delta_{\delta_2}(\rho_2) \leq \Delta_{\delta_2}(\sigma_2)$ by \Cref{lem:Delta_eps_Schur_convex_deltarho-sig}. It remains to show
\[
\Delta_{\eps_2} \circ \mmm_{\delta_2}(\rho_2) \leq \Delta_{\eps_2} \circ \mmm_{\delta_2}(\sigma_2).
\]

\item[Step $k$.] We continue recursively: for $k\geq 3$,  we define $\rho_k =\mmm_{\delta_{k-1}}(\rho_{k-1})$, $\sigma_k = \mmm_{\delta_{k-1}}(\sigma_{k-1})$. If either $\rho_k = \tau$ or $\sigma_k = \tau$, we conclude as before; otherwise, set $\delta_k = \delta(\rho_k,\sigma_k)$. If $\eps_{k-1} \leq \delta_k$, we conclude by \Cref{lem:Delta_eps_Schur_convex_deltarho-sig}; otherwise, define $\eps_k = \eps_{k-1} - \delta_k$, expand by \Cref{lem:Delta_eps1_plus_eps2} to find
\[
\Delta_{\eps_k+\delta_k}(\rho_k) = \Delta_{\eps_k} \circ \mmm_{\delta_k}(\rho_k) + \Delta_{\delta_k}(\rho_k), \qquad 
\Delta_{\eps_k+\delta_k}(\sigma_k) = \Delta_{\eps_k} \circ \mmm_{\delta_k}(\sigma_k) + \Delta_{\delta_k}(\sigma_k),
\]
and conclude
$\Delta_{\delta_k}(\rho_k) \leq \Delta_{\delta_k}(\sigma_k)$ by \Cref{lem:Delta_eps_Schur_convex_deltarho-sig}. At the end of step $k$, it remains to show that
\[
\Delta_{\eps_k} \circ \mmm_{\delta_k}(\rho_k) \leq \Delta_{\eps_k} \circ \mmm_{\delta_k}(\sigma_k).
\]
\end{enumerate}

This process must terminate in less than $4d$ steps, as follows. At each step $k$ for which the process does not conclude, we have either $\delta(\rho_k,\sigma_k) = \delta(\rho_k)$ and therefore  $k_+(\rho_k) > k_+(\rho_{k-1})$ or $k_-(\rho_k) > k_-(\rho_{k-1})$ or else $\delta(\rho_k,\sigma_k) = \delta(\sigma_k)$ and therefore $k_+(\sigma_k) > k_+(\sigma_{k-1})$ or $k_-(\sigma_k) > k_-(\sigma_{k-1})$, by \Cref{lem:delta_rho}. Since $k_\pm(\omega)\leq d$ for $\omega\in \D(\cH)$ and one of each of the four integers $k_\pm(\rho), k_\pm(\sigma)$ increases at each step,  there cannot be more than $4d$ steps in total. Note that $\Delta_\eps(\rho) = \Delta_\eps(\sigma)$ implies equality in the use of \Cref{lem:Delta_eps_Schur_convex_deltarho-sig} in Step 1, which requires $\lambda_+(\rho) = \lambda_+(\sigma)$.\hfill\proofSymbol 

\section{Proof of lemmas \label{sec:proof_lemmas}}
\paragraph{Proof of \Cref{lem:Delta_eps1_plus_eps2}}
We expand
\[
\Delta_{\eps_1+\eps_2}(\rho) = H_{(h,\phi)} \circ \mmm_{\eps_1+\eps_2}(\rho) - H_{(h,\phi)}(\rho).
\]
By \Cref{prop:properties_of_Lambda_eps}\ref{item:Lambda_eps-semigroup}, we have $\mmm_{\eps_1+\eps_2}(\rho) = \mmm_{\eps_1}\circ \mmm_{\eps_2}(\rho)$. Thus,
\begin{align*}	
\Delta_{\eps_1+\eps_2}(\rho) &= H_{(h,\phi)} \circ\mmm_{\eps_1}( \mmm_{\eps_2}(\rho))  - H_{(h,\phi)}( \mmm_{\eps_2}(\rho)) + H_{(h,\phi)}( \mmm_{\eps_2}(\rho))  - H_{(h,\phi)}(\rho)\\
&= \Delta_{\eps_1} ( \mmm_{\eps_2}(\rho)) + \Delta_{\eps_2}(\rho). \tag*{\proofSymbol}
\end{align*}

\paragraph{Proof of \Cref{lem:delta_rho}}
We use the notation of \Cref{def:delta-rho-sigma}.	 We check that the choice $m=k_+(\rho)$ satisfies the definition of $m_+(\rho,\eps)$, namely that the choice $m=k_+(\rho)$ solves \eqref{eq:m_is_sol_to_this}. 

\begin{itemize}
	\item If $T(\rho,\tau) > \eps$, then $\gamma_+^{(m)}(\rho,\eps) = \frac{1}{m}\left(\sum_{i=1}^{m} \lambda_i(\rho) - \eps\right)$.
And indeed, taking $m = k_+(\rho)$ we find
\[
\lambda_{k_++1}(\rho) = \mu_2 \leq \frac{1}{k_+}\left(\sum_{i=1}^{k_+} \lambda_i(\rho) - \eps\right) = \mu_1 - \frac{\eps}{k_+}
\]
since $\frac{\eps}{k_+} \leq \frac{1}{k_+}\delta(\rho,\sigma) \leq \mu_1-\mu_2$. Additionally, $ \mu_1 - \frac{\eps}{k_+} < \mu_1 = \lambda_{k_+}(\rho)$. Therefore, $m = k_+(\rho)$ solves \eqref{eq:m_is_sol_to_this}, hence $m_+(\rho,\eps) = k_+(\rho)$.

\item In the case $0 < T(\rho,\tau) \leq \eps$. Then $\gamma_+(\rho,\eps) = \frac{1}{d}$. Since $\rho \neq \tau$, we have $\lambda_{k_+(\rho)}(\rho) = \mu_1 > \frac{1}{d}$. Moreover,
\[
k_+(\rho) \big(\mu_1 - \tfrac{1}{d}\big) \leq \tr [(\rho - \tau)_+ ] = T(\rho,\tau) \leq \eps \leq k_+(\rho) (\mu_1 - \mu_2)
\]
and therefore $\mu_1 - \frac{1}{d}\leq \mu_1 - \mu_2$, yielding $\mu_2 \leq \frac{1}{d}$. Thus, $m_+(\rho,\eps) = k_+(\rho)$.
\end{itemize}
Proving that $m_-(\rho,\eps) = k_-(\rho)$ is analogous.

Next, consider $\eps = \delta(\rho)$. If $0 < T(\rho,\tau) \leq \eps$, then $\mmm_\eps(\rho) = \tau$ (by \Cref{prop:properties_of_Lambda_eps}\ref{item:Lambda-eps-near-tau}) and $d = k_+(\mmm_\eps(\rho)) > k_+(\rho)$, by the assumption that $\rho\neq \tau$. 
Otherwise, without loss of generality, assume $\delta(\rho,\sigma) = k_+(\rho) (\mu_1-\mu_2)$. We show that $k_+(\mmm_\eps(\rho)) > k_+(\rho)$. By the above, $m_+(\rho,\eps) = k_+(\rho)$, and therefore
\[
\gamma_+(\rho,\eps)= \mu_1 + \frac{\eps}{k_+(\rho)} =  \mu_1 +(\mu_1-\mu_2) = \mu_2.
\]
As
\[
\mmm_\eps(\rho) = 
\displaystyle \sum_{i=1}^{m_+(\rho,\eps)} \gamma_+(\rho,\eps)\ketbra{i}{i} + \sum_{i=m+1}^{d-m_-(\rho,\eps)} \lambda_i(\rho)\ketbra{i}{i} + \sum_{i=d-m_-(\rho,\eps)+1}^d \gamma_-(\rho,\eps)\ketbra{i}{i} 
\]
by \cref{def:Lambdaeps} and $\lambda_{m_+(\rho,\eps)+1}(\rho) = \mu_2$, we have that $k_+(\mmm_\eps(\rho))$, the multiplicity of $\mu_2$ for $\mmm_\eps(\rho)$, is strictly larger than $k_+(\rho) =m_+(\rho,\eps)$. \hfill\proofSymbol

\paragraph{Proof of \Cref{lem:Delta_eps_Schur_convex_deltarho-sig}}
As in the proof of \Cref{thm:Delta_eps_Schur_convex},  if $\sigma = \tau$, then $\rho \prec \sigma$ implies $\rho = \tau$, and hence $\Delta_\eps(\rho) = 0 = \Delta_\eps(\sigma)$. If $\rho = \tau \neq \sigma$, then $\Delta_\eps(\rho) = 0 < \Delta_\eps(\sigma)$ by the strict Schur concavity of the $H_{\id,\phi}$ entropy.
Now, assume $\rho \neq \tau \neq \sigma$. We aim to show
\begin{equation}
H_{(\id,\phi)} \circ \mmm_\eps(\rho) - H_{(\id,\phi)}  (\rho) \leq H_{(\id,\phi)} \circ \mmm_\eps(\sigma) - H_{(\id,\phi)}  (\sigma) .\label{eq:WTS_diffHs}
\end{equation}
By two applications of \Cref{lem:delta_rho}, we have $m_+(\rho,\eps) = k_+(\rho)$, $m_-(\rho,\eps) = k_-(\rho)$, $m_+(\sigma,\eps) = k_+(\sigma)$, and $m_-(\sigma,\eps) = k_-(\sigma)$. Therefore, by \eqref{eq:def_h-phi-entropy} and \eqref{def:Lambdaeps},
\[
H_{(\id,\phi)} (\mmm_\eps(\rho)) =  k_+(\rho) \phi( \gamma_+(\rho)) + \sum_{i = k_+(\rho)+1}^{d- k_-(\rho)} \phi(\lambda_i(\rho)) + k_-(\rho) \phi(\gamma_-(\rho))
\]
since $h=\id$. The $\phi(\lambda_i(\rho))$ terms for $i=k_+(\rho)+1,\dotsc,d- k_-(\rho)$ therefore cancel in $\Delta_\eps(\rho)$ yielding
\begin{equation} \label{eq:Hidphirho}
H_{(\id,\phi)} \circ \mmm_\eps(\rho) - H_{(\id,\phi)}  (\rho) = k_+(\rho) [\phi( \gamma_+(\rho,\eps)) -\phi(\lambda_+(\rho))]+ k_-(\rho)[ \phi(\gamma_-(\rho,\eps)) -\phi(\lambda_-(\rho))]
\end{equation}
and similarly
\begin{equation}\label{eq:Hidphisigma}
H_{(\id,\phi)} \circ \mmm_\eps(\sigma) - H_{(\id,\phi)}  (\sigma) = k_+(\sigma) [\phi( \gamma_+(\sigma,\eps)) - \phi(\lambda_+(\sigma))] + k_-(\sigma) [\phi(\gamma_-(\sigma,\eps)) - \phi(\lambda_-(\sigma))].
\end{equation}
We conclude by invoking \Cref{lem:ineq} below, which bounds the first term (resp.~second term) of \eqref{eq:Hidphirho} by the first term (resp.~second term) of \eqref{eq:Hidphisigma}.\hfill\proofSymbol

 \begin{lemma} \label{lem:ineq} For $\rho\prec \sigma$ with $\rho\neq\tau\neq\sigma$ and $0<\eps \leq \delta(\rho,\sigma)$, we have
 \begin{gather} \label{eq:alpha2ieq-proof}
	k_\pm(\rho)[ \phi(\gamma_\pm(\rho,\eps)) -\phi(\lambda_\pm(\rho)) ] \leq k_\pm(\sigma)[  \phi(\gamma_\pm(\sigma,\eps))-\phi(\lambda_\pm(\sigma))]
	\end{gather}
	and that equality in \eqref{eq:alpha2ieq-proof} implies  $\lambda_\pm(\rho) = \lambda_\pm(\sigma)$.
 \end{lemma}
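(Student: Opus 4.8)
The plan is to prove the $+$ inequality in \eqref{eq:alpha2ieq-proof}; the $-$ inequality follows by the analogous argument applied to the smallest eigenvalues, with all orderings reversed. Throughout, abbreviate $p_\omega := \gamma_+(\omega,\eps)$ and $q_\omega := \lambda_+(\omega)$ for $\omega\in\{\rho,\sigma\}$.

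First I would record that $\mmm_\eps$ removes exactly mass $\eps$ from the top eigenvalue-group. By \Cref{lem:delta_rho} we have $m_+(\omega,\eps)=k_+(\omega)$, so the top $k_+(\omega)$ eigenvalues, all equal to $q_\omega$, are replaced by $p_\omega$ while every other eigenvalue is untouched (this is exactly what produces the clean forms \eqref{eq:Hidphirho}--\eqref{eq:Hidphisigma}). In the regime $T(\omega,\tau)>\eps$ the definition of $\gamma_+$ gives $p_\omega=q_\omega-\eps/k_+(\omega)$ directly, so $k_+(\omega)(q_\omega-p_\omega)=\eps$. The only delicate point is the boundary regime $T(\omega,\tau)\le\eps$, where $p_\omega=1/d$: here the inequalities in the proof of \Cref{lem:delta_rho} collapse and force the second-largest distinct eigenvalue to equal $1/d$, whence $\delta(\omega)=T(\omega,\tau)$ and therefore $\eps=\delta(\omega)=T(\omega,\tau)=k_+(\omega)(q_\omega-1/d)$, so the mass identity $k_+(\omega)(q_\omega-p_\omega)=\eps$ persists. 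Consequently both terms to be compared are secant slopes of $\phi$ scaled by $\eps$:
\[
k_+(\omega)\bigl[\phi(p_\omega)-\phi(q_\omega)\bigr]=-\eps\, S_\omega,\qquad S_\omega:=\frac{\phi(q_\omega)-\phi(p_\omega)}{q_\omega-p_\omega},
\]
so that \eqref{eq:alpha2ieq-proof} for the $+$ sign is equivalent to the single inequality $S_\rho\ge S_\sigma$.

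Next I would establish that the interval $[p_\rho,q_\rho]$ lies to the left of $[p_\sigma,q_\sigma]$ endpoint-by-endpoint. The right endpoints obey $q_\rho=\lambda_+(\rho)\le\lambda_+(\sigma)=q_\sigma$ because $\rho\prec\sigma$ gives $\lambda_1(\rho)\le\lambda_1(\sigma)$. For the left endpoints the crucial observation is that $p_\omega=\gamma_+(\omega,\eps)=\lambda_+(\mmm_\eps(\omega))$, the largest eigenvalue of $\mmm_\eps(\omega)$ by \eqref{def:Lambdaeps}; since $\mmm_\eps$ is majorization-preserving (\Cref{prop:properties_of_Lambda_eps}\ref{item:Lambda_eps-maj-preserving}), $\rho\prec\sigma$ yields $\mmm_\eps(\rho)\prec\mmm_\eps(\sigma)$ and hence $p_\rho=\lambda_+(\mmm_\eps(\rho))\le\lambda_+(\mmm_\eps(\sigma))=p_\sigma$. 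This is the key step: it is precisely the majorization-preservation of the minimizer map that supplies the ordering of the $\gamma_+$'s, which is not otherwise apparent.

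Finally I would invoke the elementary concavity estimate \Cref{cor:main_concave_ieq}: for concave $\phi$, the conditions $p_\rho\le p_\sigma$ and $q_\rho\le q_\sigma$ force the secant slope over $[p_\rho,q_\rho]$ to dominate that over $[p_\sigma,q_\sigma]$, i.e.\ $S_\rho\ge S_\sigma$, which proves \eqref{eq:alpha2ieq-proof}. For the equality clause, strict concavity of $\phi$ makes this comparison strict unless the two intervals coincide, so $S_\rho=S_\sigma$ forces $q_\rho=q_\sigma$, that is $\lambda_+(\rho)=\lambda_+(\sigma)$. The $-$ case is identical after replacing the top group by the bottom group: there $\gamma_-(\omega)=\lambda_-(\omega)+\eps/k_-(\omega)$, the relevant intervals are $[\lambda_-(\omega),\gamma_-(\omega)]$, the orderings become $\lambda_-(\rho)\ge\lambda_-(\sigma)$ (from $\rho\prec\sigma$) and $\gamma_-(\rho)\ge\gamma_-(\sigma)$ (from $\mmm_\eps(\rho)\prec\mmm_\eps(\sigma)$ at the smallest eigenvalue), so the intervals are shifted right instead of left and the direction of the secant-slope comparison flips to match. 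I expect the main obstacle to be the bookkeeping in the first step around the boundary regime $T(\omega,\tau)\le\eps$, where the naive formula for $\gamma_+$ no longer reads off directly and one must use $\eps\le\delta(\omega)$ to recover the mass identity.
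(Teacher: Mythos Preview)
Your argument tracks the paper's strategy closely --- reduce both sides to secant slopes of $\phi$, order the endpoints via $\rho\prec\sigma$ and $\mmm_\eps(\rho)\prec\mmm_\eps(\sigma)$, then invoke \Cref{cor:main_concave_ieq} --- but it has a genuine gap in the ``mass identity'' $k_+(\omega)(q_\omega-p_\omega)=\eps$. In the boundary regime $T(\omega,\tau)\le\eps$ this can fail strictly. Take $d=2$, $\rho=(0.55,0.45)$, $\sigma=(0.6,0.4)$, $\eps=0.09$: then $\rho\prec\sigma$, $\rho\neq\tau\neq\sigma$, $\delta(\rho,\sigma)=\delta(\rho)=0.1\ge\eps$, but $T(\rho,\tau)=0.05<\eps$, so $p_\rho=\gamma_+(\rho,\eps)=\tfrac12$ and $k_+(\rho)(q_\rho-p_\rho)=0.05\neq0.09$. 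Your claim that ``the inequalities in the proof of \Cref{lem:delta_rho} collapse and force the second-largest distinct eigenvalue to equal $1/d$'' is simply false: the chain $k_+(\mu_1-\tfrac1d)\le T(\omega,\tau)\le\eps\le k_+(\mu_1-\mu_2)$ only yields $\mu_2\le\tfrac1d$, and in the example $\mu_2=0.45<0.5$. So the two intervals $[p_\rho,q_\rho]$ and $[p_\sigma,q_\sigma]$ need not carry the same weight $\eps$, and the reduction to the single inequality $S_\rho\ge S_\sigma$ is unavailable.

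The paper does not claim the mass identity. It keeps the weights $k_+(\omega)(q_\omega-p_\omega)$ as they are --- equal to $\eps$ when $T(\omega,\tau)>\eps$ and to $T(\omega,\tau)$ when $T(\omega,\tau)\le\eps$ --- and splits into three cases according to the position of $\eps$ relative to $T(\rho,\tau)$ and $T(\sigma,\tau)$. In the mixed cases the additional inequalities $T(\rho,\tau)\le\eps$ and $T(\rho,\tau)\le T(\sigma,\tau)$ are invoked \emph{after} the slope comparison to absorb the mismatch between the two weights. The boundary bookkeeping you flagged as ``the main obstacle'' is exactly where your argument breaks; it cannot be collapsed to a single formula and requires the paper's case split.
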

 To prove this result, we first recall a simple consequence of the majorization order $\rho \prec \sigma$.
\begin{lemma}
If $\rho \prec \sigma$, then $T(\rho,\tau) \leq T(\sigma,\tau)$.
\end{lemma}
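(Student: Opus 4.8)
The plan is to reduce the statement to the Schur-convexity of a separable convex function of the eigenvalues. The key observation is that $\tau = \frac{\one}{d}$ is a scalar multiple of the identity and hence commutes with $\rho$; diagonalizing $\rho$ therefore simultaneously diagonalizes $\rho - \tau$, whose eigenvalues are precisely $\lambda_i(\rho) - \frac{1}{d}$. Consequently,
\[
T(\rho,\tau) = \frac{1}{2}\|\rho-\tau\|_1 = \frac{1}{2}\sum_{i=1}^d \left| \lambda_i(\rho) - \frac{1}{d}\right|,
\]
and the identical expression holds with $\sigma$ in place of $\rho$. The claim thus becomes $\sum_{i=1}^d g(\lambda_i(\rho)) \leq \sum_{i=1}^d g(\lambda_i(\sigma))$ for the function $g(x) := \left|x - \frac{1}{d}\right|$.

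Second, I would observe that $g$ is convex on $\R$, and invoke the classical fact (Hardy--Littlewood--P\'olya / Karamata) that for any convex $g$ the symmetric function $x \mapsto \sum_i g(x_i)$ is Schur-convex. Since $\rho \prec \sigma$ means $\vec\lambda(\rho) \prec \vec\lambda(\sigma)$ by definition, applying this to $g$ yields the desired inequality immediately. If a fully self-contained argument were preferred, one could instead note that majorization is generated by finitely many ``transfer'' (Robin Hood) operations that move two coordinates closer together while preserving their sum, and check that each such operation does not increase $\sum_i g(\lambda_i)$---which is exactly an instance of the convexity of $g$ applied to the two affected coordinates.

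There is no substantial obstacle here: the content is entirely the Schur-convexity of a separable convex function of the spectrum, which is standard. The only point requiring care is the initial reduction, namely that the trace distance to the maximally mixed state equals the $\ell^1$ distance between eigenvalue vectors; this relies on $\tau$ being a multiple of the identity, so that no off-diagonal structure of $\rho$ enters. As a sanity check, the same quantity already appears in the form $T(\rho,\tau) = \tr[(\rho-\tau)_+] = \sum_{i:\,\lambda_i(\rho) > 1/d}(\lambda_i(\rho) - \frac{1}{d})$ used in the proof of \Cref{lem:delta_rho}.
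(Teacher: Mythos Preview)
Your argument is correct. Reducing $T(\rho,\tau)$ to $\tfrac{1}{2}\sum_i |\lambda_i(\rho)-\tfrac{1}{d}|$ is valid precisely because $\tau$ is a scalar multiple of the identity, and the Schur-convexity of $x\mapsto \sum_i g(x_i)$ for the convex function $g(t)=|t-\tfrac{1}{d}|$ is the classical Hardy--Littlewood--P\'olya inequality. The alternative justification via Robin Hood transfers is also sound.

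The paper takes a genuinely different route. It invokes the structural characterization of majorization (Alberti--Uhlmann): $\rho\prec\sigma$ implies $\rho=\Phi(\sigma)$ for some mixture of unitaries $\Phi(\cdot)=\sum_i p_i U_i\cdot U_i^*$, which is a unital CPTP map. Since $\Phi(\tau)=\tau$, the monotonicity of the trace distance under CPTP maps gives $T(\rho,\tau)=T(\Phi(\sigma),\Phi(\tau))\leq T(\sigma,\tau)$ directly. Your approach stays at the level of eigenvalue vectors and uses only the classical majorization inequality, so it is arguably more elementary and self-contained; the paper's approach is operator-theoretic and has the advantage of extending verbatim to any distance that is monotone under CPTP maps and for which $\tau$ is a fixed point of mixtures of unitaries, without needing to express the distance as a symmetric function of eigenvalues.
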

\begin{proof}	
If $\rho \prec \sigma$, then by Theorem 2-2 (b) of \cite{alberti-uhlmann1982}, we have $\rho = \Phi(\sigma)$ for a map $\Phi(\cdot) = \sum_i p_i U_i \cdot U_i^*$ where $p_i$ is a finite probability distribution and each $U_i$ is unitary. $\Phi$ is completely positive and trace-preserving (CPTP) as well as unital. Since $\Phi(\tau)=\tau$,
\[
T(\rho,\tau) = T(\rho,\Phi(\tau)) = T(\Phi(\sigma),\Phi(\tau)) \leq  T(\sigma,\tau)
\]
where the inequality follows from the monotonicity of the trace distance under CPTP maps.
\end{proof}

\paragraph{Proof of \Cref{lem:ineq}}
We prove the case $+$ in \cref{eq:alpha2ieq-proof}; the case $-$ is proved analogously. First, we have $\gamma_+(\rho,\eps)\leq \gamma_+(\sigma,\eps)$ and $\lambda_+(\rho) \leq \lambda_+(\sigma)$, using that $\rho \prec \sigma$ and $\mmm_\eps \rho \prec \mmm_\eps \sigma$ by \Cref{prop:properties_of_Lambda_eps}. Moreover, by definition, $\gamma_+(\rho,\eps) < \lambda_+(\rho)$ and $\gamma_+(\sigma,\eps) < \lambda_+(\sigma)$. Therefore, by applying \Cref{cor:main_concave_ieq} and multiplying by minus one, we have
\begin{equation}\label{eq:phi-ieq-proof-minus-1}
\frac{ \phi(\gamma_+(\rho,\eps)) -\phi(\lambda_+(\rho)) }{\lambda_+(\rho)-\gamma_+(\rho,\eps)} \leq  \frac{  \phi(\gamma_+(\sigma,\eps)) - \phi(\lambda_+(\sigma))}{\lambda_+(\sigma) - \gamma_+(\sigma,\eps)}.
\end{equation}
and that equality requires  $\lambda_+(\rho) = \lambda_+(\sigma)$.

Now, we complete the proof of \eqref{eq:alpha2ieq-proof} in three cases.
\begin{enumerate}
	\item[Case 1:] $T(\sigma,\tau) \leq \eps$. Then $T(\rho,\tau)\leq \eps$ as well, and
	\[
	\lambda_+(\rho) - \gamma_+(\rho,\eps) = \lambda_+(\rho) - \frac{1}{d}.
	\]
	As shown in the proof of \Cref{lem:delta_rho}, the second-largest eigenvalue of $\rho$ is less or equal to $\frac{1}{d}$. Therefore,
	\[
	T(\rho,\tau) = \tr[ (\rho -\tau)_+] = k_+(\rho)\big(\lambda_+(\rho) - \tfrac{1}{d}\big),
	\]
	and hence,
	\begin{equation}\label{eq:rho-diff-is-T_proof}
	\lambda_+(\rho) - \gamma_+(\rho,\eps) = \frac{1}{k_+(\rho)}T(\rho,\tau).
	\end{equation}

	As $T(\sigma,\tau) \leq \eps$, we likewise have $\lambda_+(\sigma) - \gamma_+(\sigma,\eps) = \frac{1}{k_+(\sigma)}T(\sigma,\tau)$.
	Then \eqref{eq:phi-ieq-proof-minus-1} yields
	\[
\frac{k_+(\rho)[ \phi(\gamma_+(\rho,\eps)) -\phi(\lambda_+(\rho)) ]}{T(\rho,\tau)} \leq  \frac{ k_+(\sigma)[  \phi(\gamma_+(\sigma,\eps))-\phi(\lambda_+(\sigma))]}{T(\sigma,\tau)}.
	\]
	Since $ T(\sigma,\tau)\geq T(\rho,\tau) $, we may bound the right-hand side by $\frac{ k_+(\sigma)[  \phi(\gamma_+(\sigma,\eps))-\phi(\lambda_+(\sigma))]}{T(\rho,\tau)}$. Then multiplying by $T(\rho,\tau)$ yields \eqref{eq:alpha2ieq-proof}.

\item[Case 2:] $T(\rho,\tau) \leq \eps < T(\sigma,\tau)$. In this case, \eqref{eq:rho-diff-is-T_proof} holds, and $\gamma_+(\sigma,\eps) = \lambda_+(\sigma) - \frac{\eps}{k_+(\sigma)}$.
 Therefore, \eqref{eq:phi-ieq-proof-minus-1} yields
\[
\frac{k_+(\rho) [\phi(\gamma_+(\rho,\eps)) -\phi(\lambda_+(\rho)) ]}{T(\rho,\tau)} \leq  \frac{  k_+(\sigma)[\phi(\gamma_+(\sigma,\eps)) - \phi(\lambda_+(\sigma))]}{\eps}.
\]
Similarly to the previous case, the inequality $\eps\geq T(\rho,\tau) $ bounds the right-hand side by $\frac{  k_+(\sigma)[\phi(\gamma_+(\sigma,\eps)) - \phi(\lambda_+(\sigma))]}{T(\rho,\tau)}$, and multiplying by $T(\rho,\tau)$ yields \eqref{eq:alpha2ieq-proof}.

\item[Case 3:] $T(\rho,\tau) > \eps$. Then $\gamma_+(\rho,\eps) = \lambda_+(\rho) - \frac{\eps}{k_+(\rho)}$, and $\gamma_+(\sigma,\eps) = \lambda_+(\sigma) - \frac{\eps}{k_+(\sigma)}$. Therefore, \eqref{eq:phi-ieq-proof-minus-1} yields
\[
\frac{k_+(\rho)[ \phi(\gamma_+(\rho,\eps)) -\phi(\lambda_+(\rho)) ]}{\eps} \leq  \frac{ k_+(\sigma)[  \phi(\gamma_+(\sigma,\eps))-\phi(\lambda_+(\sigma))]}{\eps} 
\]
and multiplying by $\eps$ yields \eqref{eq:alpha2ieq-proof}.
\end{enumerate}
Note that in each case, equality in \eqref{eq:alpha2ieq-proof} requires equality in \eqref{eq:phi-ieq-proof-minus-1}.
\hfill\proofSymbol

\appendix

\section{An elementary property of concave functions} \label{sec:elem-prop-concave-fun}

Given a function $\phi: I\to \R$ defined on an interval $I\subset \R$, we define the ``slope function,''
\[
\cs(x_1,x_2) =  \frac{\phi(x_2)- \phi(x_1)}{x_2-x_1}
\]
for $x_1,x_2\in I$ with $x_1\neq x_2$.
Note that $\cs$ is symmetric in its arguments. It can be shown that $\phi$ is concave (resp.~strictly concave) if and only if $\cs$ is monotone decreasing (resp.~strictly decreasing) in each argument.
\begin{proposition} \label{cor:main_concave_ieq}
Let $I\subset \R$ be an interval and $\phi: I \to \R$ be concave. For any $x_1,x_2,y_1,y_2\in I$ such that $x_1\neq x_2$, $y_1\neq y_2$, $x_1 \leq y_1$ and $x_2 \leq y_2$
we have
\[
\cs(x_1,x_2)\geq \cs(y_1,y_2).
\]
If $\phi$ is strictly concave, then equality is achieved if and only if $x_1 = y_1$ and $x_2 = y_2$.
\end{proposition}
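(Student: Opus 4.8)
The plan is to lean entirely on the characterization stated just before the proposition: for concave $\phi$, the slope function $\cs$ is monotone decreasing in each of its two arguments separately (and strictly decreasing in each argument when $\phi$ is strictly concave). Since the hypotheses only weaken the two endpoints in the same direction, $x_1 \le y_1$ and $x_2 \le y_2$, the natural strategy is to pass from the pair $(x_1,x_2)$ to the pair $(y_1,y_2)$ one coordinate at a time, applying the one-variable monotonicity at each step and chaining the two resulting inequalities.

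Concretely, I would first attempt the route
\[
\cs(x_1,x_2) \;\geq\; \cs(y_1,x_2) \;\geq\; \cs(y_1,y_2),
\]
where the first inequality uses monotonicity in the first argument (since $x_1 \le y_1$) and the second uses monotonicity in the second argument (since $x_2 \le y_2$); the second link needs only $y_1 \ne y_2$, which is assumed. The main obstacle is purely bookkeeping: the intermediate slope $\cs(y_1,x_2)$ is defined only when $y_1 \ne x_2$. I would dispose of this with a short case distinction. If $y_1 \ne x_2$, use the route above; otherwise route through the other intermediate pair,
\[
\cs(x_1,x_2) \;\geq\; \cs(x_1,y_2) \;\geq\; \cs(y_1,y_2),
\]
which is legitimate provided $x_1 \ne y_2$. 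The remaining point is that the two routes cannot both be blocked: if simultaneously $y_1 = x_2$ and $x_1 = y_2$, then $x_1 \le y_1 = x_2 \le y_2 = x_1$ forces $x_1 = x_2$, contradicting $x_1 \ne x_2$. Hence at least one intermediate pair has distinct entries and the weak inequality follows.

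For the equality statement under strict concavity, I would note that $\cs$ is now strictly decreasing in each argument, so along whichever chain is used, the equality $\cs(x_1,x_2) = \cs(y_1,y_2)$ forces equality at each individual link; strict monotonicity then yields $x_1 = y_1$ from the step that moves the first argument and $x_2 = y_2$ from the step that moves the second. The converse implication is immediate from the definition of $\cs$. Apart from the small case split guaranteeing that an admissible intermediate pair exists, every step is a direct invocation of the one-variable slope monotonicity, so I expect no further difficulty.
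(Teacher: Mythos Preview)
Your proposal is correct and follows essentially the same route as the paper: chain through an intermediate pair and apply the one-variable monotonicity of $\cs$ twice, then read off the equality conditions from strict monotonicity. You are in fact slightly more careful than the paper, which writes only the single chain $\cs(x_1,x_2) \geq \cs(y_1,x_2) \geq \cs(y_1,y_2)$ without addressing the possibility $y_1 = x_2$; your case split cleanly disposes of that edge case.
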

\begin{proof}	
For $\phi$ concave, we have $\cs(x_1,x_2) \geq \cs(y_1,x_2) \geq \cs(y_1,y_2)$. Next, assume $\phi$ is strictly concave. Then equality holds in the first inequality if and only if $x_2 = y_2$, and in the second if and only if $x_1=y_1$, completing the proof.
\end{proof}

\printbibliography
\end{document}